\title{
Public Goods Games in Directed Networks with Constraints on Sharing 
}
\author {
    % Authors
    Argyrios Deligkas\textsuperscript{\rm 1},
    Gregory Gutin\textsuperscript{\rm 1, 2},
    Mark Jones\textsuperscript{\rm 3},
    Philip R.\ Neary\textsuperscript{\rm 1},
    Anders Yeo\textsuperscript{\rm 4,5}
}
\newcommand{\set}[1]{\left\{#1\right\}}
\newcommand{\2}{\vspace{0.2cm}}
\newcommand{\leaves}{\mathrm{Leaves}}
\newcommand{\roots}{\mathrm{Roots}}
\newcommand{\util}[1]{U_{#1}}
\newcommand{\IDr}{$1$-buyer}
\newcommand{\Dr}{buyer}
\newcommand{\kPNa}{$k$-pure-Nash}
\newcommand{\IPNa}{$1$-pure-Nash}
\newcommand{\PNa}{pure-Nash}
\NewDocumentCommand{\cc}{ O{} O{} m }{\mbox{%
    \expandafter\ifx\expandafter\relax\detokenize{#2}\relax\else{#2-}\fi%
    \textsf{#3}%
    \expandafter\ifx\expandafter\relax\detokenize{#1}\relax\else{-#1}\fi%
    }\xspace}
\newtheorem{theorem}{Theorem}
\newtheorem{definition}{Definition}
\newtheorem{conjecture}{Conjecture}
\newif\iflong
\newif\ifshort
\begin{document}

\maketitle

\begin{abstract}
In a public goods game, every player chooses whether or not to buy a good that all neighboring players will have access to.
%, as in (Bramoull{\'e} and Kranton, JET 2007) and (Papadimitriou and Peng, GEB 2023).
We consider a setting in which the good is indivisible, neighboring players are out-neighbors in a directed graph, and there is a capacity constraint on their number, $k$, that can benefit from the good. 
This means that each player makes a two-pronged decision: decide whether or not to buy and, conditional on buying, choose which $k$ out-neighbors to share access.
We examine both pure and mixed Nash equilibria in the model from the perspective of existence, computation, and efficiency. 
We perform a comprehensive study for these three dimensions with respect to both sharing capacity ($k$) and the network structure (the underlying directed graph), and establish sharp complexity dichotomies for each.
\end{abstract}

% Uncomment the following to link to your code, datasets, an extended version or similar.
% You must keep this block between (not within) the abstract and the main body of the paper.
% \begin{links}
%     \link{Code}{https://aaai.org/example/code}
%     \link{Datasets}{https://aaai.org/example/datasets}
%     \link{Extended version}{https://aaai.org/example/extended-version}
% \end{links}

%%%%%%%%%%%%%%%%%%%%%%%%%%%%%%%%%%%%%%%%%%%%%%%%%%%%%%%%%%%%%%%%%%%%%%%%%%%%%%%%
%%%%%%%%%%%%%%%%%%%%%%%%%%%%%%%%%%%%%%%%%%%%%%%%%%%%%%%%%%%%%%%%%%%%%%%%%%%%%%%%
%%%%%%%%%%%%%%%%%%%%%%%%%%%%%%%%%%%%%%%%%%%%%%%%%%%%%%%%%%%%%%%%%%%%%%%%%%%%%%%%

\section{Introduction}

A public good is one that is both {\em non-rivalrous} and usually {\em non-excludable}: the good can be used by many players, even by those that did not contribute to the cost. 
%Every player wants to use the good, but they prefer someone else to have bare the cost of it.
This occurs when either (a) it is difficult, or impossible, to prevent the non-contributors to use the good -- think of public lighting -- or (b) the contributors are allowed to {\em choose to share} the good with a {\em subset} of their peers -- think e.g. of Netflix allowing multiple users associated with a single account. 
Hence, public goods also generate a coordination problem because all parties would prefer to free ride, i.e., every player wants to use the good, but they prefer someone else to pay its cost.

In this paper, we consider a variant of the public goods problem in social networks as considered first by \citet{BramoulleKranton:2007:JET}: there is an underlying graph, where nodes correspond to players, each player chooses whether to contribute or not, and every player can benefit from their neighbors that contributed.
We adopt the {\em limited} and {\em asymmetric shareability} point of view in public good games and we augment the original model in two ways.
%The twist on the existing model is twofold.

First, under limited shareability players might face capacity constraints as in \citep{GerkeGutin:2024:JET,gutin2020uniqueness,GutinNeary:2023:DAM}.
%\footnote{Capacity constraints on shareable goods have been studied since, at least, British economist William Forster Lloyd over 180 years ago, whose observations came to be known as `the tragedy of the commons' \citep{Hardin:1968:S}.}
Constraints on sharing add an additional layer of %strategic 
complexity because in order to benefit from someone else's contribution, it is not enough to be neighbours: one must also be nominated as a co-beneficiary. 
%Examples appear all over.
Consider for example an online service where account-sharing is possible, but where the platform allows at most $k+1$ users to simultaneously access an account: the owner of the account cannot legally share the account with more than $k$ of their friends.
Thus, the owner has to {\em choose} which $k$ friends to invite to the account.
% Another example, distinct but similar in flavour, is ride-sharing.
% A person with a five-seater car cannot (legally) offer a ride to more than four others.
% This can create headaches if the prospective driver has at least five friends who would benefit from a lift, because clearly some of these friends cannot be offered a ride and thus must find another way to the destination.

Asymmetric shareability can be captured by modeling the societal structure by a directed graph~\citep{BAYER2023105720,LOPEZPINTADO2013160,PapadimitriouPeng:2023:GEB}.
A directed edge from node $i$ to $j$, allows contributing player $i$ to choose out-neighbour player $j$. 
This can be due to asymmetric preferences, or due to feasibility constraints.
Consider, for example, a group of people each of whom needs to reach to their work and has to decide whether to drive or not. 
Each player $j$ has to choose either to drive, or to get a lift from a player $i$ that chooses to drive {\em and} player $j$ is on their way to work.
Hence, anytime a player $i$, who has $k$ empty seats, drives, they offer a ride to $\min\set{k, d^+(i)}$ out-neighbours, where $d^+(i)$ is $i$'s out-degree.

While asymmetric shareability is more expressive and closer to reality, it creates additional complications for the game.
When the underlying network is undirected, pure strategy Nash equilibria are guaranteed to exist and are easy to find. 
When $k$ is greater than the max degree in the graph, equilibria correspond to independent dominating sets \cite{BramoulleKranton:2007:JET}.
When $k$ is less than the max degree a pure strategy Nash equilibrium always exists and one can be computed in polynomial time~\citep{GerkeGutin:2024:JET,gutin2020uniqueness,GutinNeary:2023:DAM}.
On the other hand, if the underlying social network is directed, matters are not so straightforward.
When sharing is {\em not} constrained \citep{LOPEZPINTADO2013160,BAYER2023105720,PapadimitriouPeng:2023:GEB}, pure strategy equilibria are described by what is known as a {\em kernel}, the analog of independent dominating sets for digraphs, which does not always exist.\footnote{\label{fn:kernel}Kernels are well-studied objects in graph theory but were introduced first in \cite{NeumannMorgenstern:1944:} as the solutions to cooperative games. 
\iflong
For example, when a two-sided matching market is formulated as a digraph \citep{BalinskiRatier:1997:,GutinNeary:2023:GEB,Maffray:1992:JCTB}, the collection of pairs that make up a stable matching comprise a kernel.\fi
}
Worse still, deciding if a digraph contains a kernel is NP-complete \citep{chvatal1973computational, PapadimitriouPeng:2023:GEB}.
However, the existence and complexity of equilibria in public good games with limited shareability on digraphs remained open. 

\subsection{Our Contribution}

We perform a comprehensive study on the complexity and the economic efficiency of Nash equilibria in public good games on directed graphs with sharing constraints. 
Our results revolve around the structure of the underlying digraph and the sharing parameter $k$. 
%In what follows, we use the term {\em buyer} to denote a player that contributes to , i.e., buys, the good and the term {\em non-buyer}, for every remaining player.

Our first set of technical results focuses on pure Nash equilibria. We observe that when $k=1$, equilibrium computation boils down to a graph-theoretic problem which closely related to what \citet{DG12} refer to as a {\em spanning galaxy}, a vertex-disjoint union of stars containing all vertices.
We use this connection to prove that deciding whether a pure Nash equilibrium exists is NP-complete in general (Thm.~\ref{thm:storng_vs_general}). However, when the underlying graph is strongly connected, then the problem can be solved in polynomial time (Thm.~\ref{thm:storng_vs_general}).
Unfortunately though, for every $k \geq 2$, the problem is NP-complete even on strongly connected digraphs (Thm.~\ref{thm:k>=2}).
On the positive side, in Thm.~\ref{thm:structural_results}, we identify some structural properties of the underlying digraph that guarantee the existence of a pure Nash equilibrium.

Then, we study the problem of computing a mixed Nash equilibrium, which is guaranteed to exist and we show a dichotomy with respect to $k$.
In Thm.~\ref{thm:k_one_poly} we derive a linear-time algorithm for the problem, when $k=1$. The dichotomy follows from previous results that showed PPAD-hardness for $k\geq 2$ \cite{PapadimitriouPeng:2023:GEB} even for {\em approximate} Nash equilibria \cite{DODINH2024106486}.

Our last set of results almost resolves completely the (pure) {\em Price of Stability} (PoS)~\cite{schulz2003performance} and {\em Anarchy} (PoA)~\cite{koutsoupias1999worst} of this type of game. 
When $k=1$ and a pure strategy equilibrium exists, Thm.~\ref{priceOfStabilityI} shows that PoS=1; in fact, the theorem exactly characterizes the number of buyers in every pure Nash equilibrium that maximizes the social cost.
For $k \ge 1$, in Thm.~\ref{thm:PoS} we prove an almost tight bound on PoS, again assuming that a pure Nash equilibrium exists: it is lower bounded by $k$ and upper bounded by $k +\frac{1}{k+1}$.
Next we move to Price of Anarchy and we prove that the pure PoA is exactly $k+1$ (Thm.~\ref{thm:anarchy}).
Interestingly, we show that the same bound, i.e., $k+1$ holds for PoS when we consider {\em mixed} Nash equilibria (Thm.~\ref{ThmSupportConj}).
% The theorem also shows that the bound is almost tight in that, for any $\epsilon>0$, there exists an instance of the model such that the mixed price of stability is greater than or equal to $k+1 -\epsilon$.
% \red{We also provide an example where a mixed equilibrium is strictly more efficient than the unique pure strategy equilibrium.}

% \paragraph{Relation to the literature.}
\subsection{Further Related Work}

Our model is a special case of a {\em graphical game} \citep{kearns2013graphical} that has received significant attention over the years~\cite{deligkas2023tight}.
%For surveys on (general) games played on networks, we refer the interested reader to \cite{bramoulle2016oxford, jackson2015games}.
\citet{BramoulleKranton:2007:JET} initiated the study of public good provision in networks. 
They considered an undirected network and continuous action space.
This influential paper has generated a long line of follow up work \cite{Allouch:2015:JET,Allouch:2017:GEB,Baetz:2015:TE,BramoulleKranton:2014:AER,ElliottGolub:2019:JPE,KinatederMerlino:2017:AEJM,klimm2023complexity}.

\iflong
We have assumed that the public good is indivisible and that payoffs are of the ``best shot'' variety \citep{levit2018incentive,dall2011optimal,GaleottiGoyal:2010:RES,boncinelli2012stochastic}, in that all that is required to reach a bliss point is that one neighbor provides and offers to share.
But other payoff specifications are possible --- arguably any utility function that is non-decreasing in the number of neighbors who produce seems plausible --- see \cite{PapadimitriouPeng:2023:GEB} for other specifications.
\fi

Given their importance in society, the ability to compute equilibria in environments with public goods games has become a major topic of interest.
\citet{yu2020computing} show that finding a pure Nash equilibrium in a discrete version of the public goods game with heterogeneous agents players is NP-hard.
\citet{yang2020refined} consider a binary action public goods game and show that computation of equilibria are, in general, NP-hard but become polynomial time solvable when the underlying network is restricted to some special domains, e.g., those with bounded treewidth.
\citet{gilboa2022complexity} answer a question of \citet{PapadimitriouPeng:2023:GEB} and prove that for some simple ``best-response pattern'' the problem of determining whether a non-trivial pure Nash equilibrium exists is NP-complete; they further find a polynomial time algorithm for some specific simple pattern.
\citet{gilboa2023characterization} and \citet{klimm2023complexity} present alternative proofs that extend this result to show that the problem is NP-complete for patterns that are finite and non-monotone.

\ifshort
\smallskip
\noindent
{\em 
Some of the complete proofs of our results, identified with $(\star)$, are deferred to the full version of our paper.
}
\fi

\section{Model and preliminaries}\label{sec:model}
An instance of our problem is described by a directed graph $D= (V, A)$, where $V$ is the set of $n$ vertices and $A$ is the arc set. For every vertex $i \in V$, we denote its outgoing degree by $d^+(i)$.
The vertices correspond to players and arcs represent directed connections between pairs of them. 
Each player has two options: {\em buy} or {\em abstain}. 
If a player buys, they are required to choose $k$ of their out-neighbors, or all of their out-neighbors if $k \ge d^+(i)$.
To simplify notation  we write $k_i$ to mean $\min\set{k, d^+(i)}$.

%\footnote{That is, whenever we write that each individual $i$ makes an offer to $\share(i)$ neighbours we really mean that individual $i$ nominates $\min\set{\share(i), d^+_{G}(i)}$ out-neighbours; we write $\share(i)$ to simplify notation.}
% Player $i$'s nominated out-neighbours will be able to access the account when $i$ buys.
% If a player's capacity is zero then she cannot nominate any out-neighbours.
% Writing $\Natural_{0}$ for the set of non-negative integers, there is a capacity function $\share{}: V \to \Natural_0$ that specifies, for each player, how many of her out-neighbors must \emph{be nominated} as benefactors of her choice of action.
%
%In the interest of maintaining realism, we focus on the case where the sharing capacity is the same for all players, i.e., there is a non-negative integer $k$ such that $\share(i) = \min\set{k, d^+(i)}$.
Formally, the set of pure strategies of player $i$ is given by $X_i = \set{\neg} \cup \binom{N^+(i)}{k_i}$, where $\neg$ means that player $i$ does not buy and each element in the collection of sets $\binom{N^{+}(i)}{k_i}$ conveys that player $i$ buys and chooses a particular subset of size $k_i$ of their out-neighbours. In particular, $\binom{\emptyset}{0}=\{\emptyset\}.$

% \textbf{Formally, the set of pure strategies of player $i$ is given by $X_i = \set{0,1} \times \binom{N^+(i)}{k}$, where $\emptyset$ means that player $i$ does not buy and each element in the collection of sets $\binom{N^{+}(i)}{k}$ conveys that player $i$ buys and chooses a particular subset of size $k$ of their out-neighbours.}
% \footnote{For any non-empty set $A$ and a non-negative integer $k$, we write $\binom{A}{k}$ to denote the collection of subsets of $A$ with elements $k$. That is, $\binom{A}{k} = \{A\}$ when $k \geq \abs{A}$, $\binom{A}{k} = \set{S \subseteq A \, : \, \abs{S} = k}$ when $0 < k < \abs{A}$, and $\binom{A}{k} = \emptyset$ when $k= 0$.}
% \marginpar{\tiny{\PN{Maybe the footnote is not needed...?}}}
% \todo{This sentence is not accurate. It is talking about equilibrium not about what can happen.}
A pure strategy profile is a vector $\boldsymbol{x}  = (x_1, \dots x_n) \in \times_{j=1}^{n}X_j$ that specifies a pure strategy for each player. 
%The payoff structure of the game essentially states that every player has to either buy {\em or} to be chosen by one of their in-neighbors that buys and it is formally defined as follows.
% \todo{here is explanation of $c$.}
With price of buying given by $c \in (0,1)$, utilities are formally defined as follows.
% A player that 
% Attending the event brings a benefit that we normalise to 0, a player who does not attend receives a penalty of $-1$, and driving to the event costs $c \in(0,1)$.
% Given this, the utility to each player $i$ is given by a real-valued function, $\util{i}$, defined on $\times_{j=1}^{n}X_j$ where
\begin{equation*}%\label{eq:utility}
\util{i} (\boldsymbol{x}) = \left\lbrace
  \begin{array}{r l}
    -1, & \text{ if } x_i = \neg \text{ and } i \not\in x_j \text{ for all } j \in N^{-}(i),\\
    % 1, & \text{ if } x_i = \emptyset \text{ and } x_j \neq \emptyset \text{ and } i \in x_j \text{ for some } j \in N^{-}(i)\\
    0, & \text{ if } x_i = \neg \text{ and } i \in x_j \text{ for a } j \in N^{-}(i),\\
    -c, & \text{ if } x_i \neq \neg.
  \end{array}
\right.
\end{equation*}

Observe that maximizing utility is equivalent to minimizing cost.
Moreover, note that the utility of a player that buys does not depend on who they nominate. 

\subsection{Questions of interest}
We focus on existence, computation, and the (relative) efficiency and inefficiency of equilibria.
\iflong
The latter is done by benchmarking the best and worst equilibrium outcome against the optimal outcome of the system as a whole. \fi

From the perspective of player $i$, a pure strategy profile $\boldsymbol{x}$ can be written as $(x_i, \boldsymbol{x}_{-i})$, where $x_i\in X_i$ is player $i$'s strategy and $\boldsymbol{x}_{-i}$ is the $(n-1)$-dimensional vector listing the behavior of all players other than $i$.
With this, pure Nash equilibria are defined as follows.
\begin{definition}\label{def:pureNE}
Strategy profile $\boldsymbol{x}^* = (x_1^*, \dots x_n^*)$ is a {\em pure Nash equilibrium} if for every $i = 1, \dots, n$, and every $x_{i} \in X_{i}$, it holds that 
$\util{i}(x_i^*, \boldsymbol{x}^*_{-i}) \geq \util{i}(x_i, \boldsymbol{x}^*_{-i})$.
\end{definition}

\iflong
Pure Nash equilibria do not always exist even when sharing is unconstrained.
One immediate barrier --- and one that will also play an important role in our more general model --- is odd cycles.
To see this, consider the three-player example in Figure~\ref{fig:noPure} taken from \cite{PapadimitriouPeng:2023:GEB}.
The three players are labeled $x, y$, and $z$ and the underlying network on which they are arranged is a directed cycle.
Sharing is unconstrained which is equivalent to $k=1$ in our model (since $k$ is equal to each player's out-degree in the network).
{To see that there is no pure strategy equilibrium, suppose to the contrary that there is one.
Note that in a pure strategy equilibrium at least one player must buy, and suppose that $x$ is some such player.
It follows that the best-response for $y$, who is $x$'s out-neighbor, is not to buy.
But since $y$ will not buy, $z$ must buy and as such it is optimal for $x$ not to buy.}

\begin{figure}[htb]
\begin{center}
%\tikzstyle{vertexBS}=[regular polygon,regular polygon sides=4,draw, top color=black!100, bottom color=black!80, minimum size=10pt, scale=0.75, inner sep=0.8pt]
%\tikzstyle{vertexWS}=[regular polygon,regular polygon sides=4,draw, minimum size=10pt, scale=0.75, inner sep=0.8pt]
%\tikzstyle{vertexBC}=[circle, draw, top color=black!100, bottom color=black!80, minimum size=10pt, scale=0.75, inner sep=0.8pt]
\tikzstyle{vertexWC}=[circle, draw, minimum size=10pt, scale=0.75, inner sep=0.8pt]
%\tikzstyle{vertexNii}=[regular polygon,regular polygon sides=5,draw, minimum size=10pt, scale=0.75, inner sep=0.8pt]
\begin{tikzpicture}[scale=0.3]
%  \node (z) at (2,-1) [vertexNii]{};
  % \node at (4,-1) {$D_2$};
  \node (x) at (1,3) [vertexWC]{$x$};
  \node (y) at (7,3) [vertexWC]{$y$};
  \node (z) at (4,7) [vertexWC]{$z$};
  \draw[->, line width=0.03cm] (x) -- (y);
  \draw[->, line width=0.03cm] (y) -- (z);
  \draw[->, line width=0.03cm] (z) -- (x);
\end{tikzpicture} 
\caption{No pure strategy equilibrium when $k=1$.}
\label{fig:noPure}
\end{center}
\end{figure}
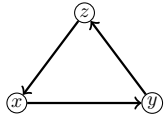
\fi

\ifshort
Pure Nash equilibria do not always exist even when sharing is unconstrained, i.e., $k\ge \max_i d^+(i)$.
\fi
If we allow for mixed strategies then existence is assured.
Formally, a mixed strategy for player $i$, denoted $\sigma_i$, is a probability distribution over their set of pure strategies.
That is, $\sigma_i(x_i) \geq 0$ for all $x_i \in X_i$ and $\sum_{x_i\in X_i} \sigma_i(x_i) =1$.
As with pure strategy profiles, a given mixed strategy profile $\boldsymbol{\sigma}  = (\sigma_1, \dots \sigma_n)$ can be viewed from the perspective of some player $i$ as $(\sigma_i, \boldsymbol{\sigma}_{-i})$.
%\iflong
The following extends the notion of equilibrium to mixed strategies.
(As is standard, we abuse notation somewhat and also use $\util{i}$ to denote player $i$'s utility defined on the space of mixed strategies.)
%\fi
\begin{definition}\label{def:mixedNE}
A strategy profile $\boldsymbol{\sigma}^* = (\sigma_1^*, \dots \sigma_n^*)$ is a mixed strategy Nash equilibrium if for every $i = 1, \dots, n$, and every $\sigma_{i}$, it holds that
$\util{i}(\sigma_i^*, \boldsymbol{\sigma}^*_{-i}) \geq \util{i}(\sigma_i, \boldsymbol{\sigma}^*_{-i})$.
\end{definition}

% Next, we introduce the {\em price of stability} \cite{schulz2003performance} and the {\em price of anarchy} \cite{koutsoupias1999worst}.
% These concepts benchmark, respectively, how well and how poorly a decentralized system performs in equilibrium relative to how well the system could perform in the hands a centralized authority.

% Given the structure on payoffs, it is immediate that all players will attend the event in any pure strategy equilibrium.
% (If not, then a player who does not attend cannot be best-responding since opting to drive would strictly increase their payoff.)
% Furthermore, we can measure optimal outcomes in terms of cost minimisation because if a player does not attend the event --- either as a driver or as a passenger --- then there is a penalty of -1, and therefore the ``gain'' associated with each equilibrium is $c$ multiplied by the number of those who drive in equilibrium.
Given a digraph $D$,
we denote $pn_k(D)$ and $Pn_k(D)$ the minimum and maximum, respectively, number of buyers in a pure strategy Nash equilibrium (conditional on existence) and we write $b_k(D)$ to denote the smallest number of buyers needed such that everyone either buys or is chosen by an in-neighboring buyer.
The (pure) price of stability (PoS) and the (pure) price of anarchy (PoA) are defined by comparing most/least efficient equilibrium against the most efficient outcome.
\footnote{Observe, the most efficient outcome is equivalent to minimum the number of buyers required to guarantee that every player has access to the good.}
Formally, both are given as follows, where suprema are taken for fixed $k$ over all $D$ that have a pure Nash equilibrium. 
\begin{equation*}%\label{eq:purePrice}
PoS_k = \sup_{D}\frac{pn_k(D)}{b_k(D)}; \hspace{.4in} PoA_k = \sup_{D} \frac{Pn_k(D)}{b_k(D)}
\end{equation*}

The mixed price of anarchy and mixed price of stability are then defined analogously
%to their pure counterparts in \eqref{eq:purePrice} 
except that mixing is allowed.
If mixing is permitted, then by definition the efficiency attainable, both in and out of equilibrium, can only increase since optimization occurs over a strictly larger set.

%\newpage

\section{The complexity of pure Nash equilibria}

We begin this section by showing how all pure strategy outcomes in our model can be described in purely graph-theoretic terms;
we follow standard terminology on digraphs, see \citet{bang2008digraphs}.

Let $D$ be a digraph and let $k\ge 1$ be an integer.
We say that a subset of vertices S is a $k$-buyer set if by picking $k_i$ arcs out of every vertex $i$ in $S$ we can obtain a subdigraph with no isolated vertex in $V \setminus S$.
If a subdigraph, $R$, has no isolated vertex in $V \setminus S$ and is obtained by adding %\min\set{k, d^+(i)}$ 
$k_i$ arcs out of every vertex in $S$, then we call $R$ an {\em extension} of $S$.
We will refer to the vertices in $S$ as buyers, the vertices in $V \setminus S$ as non-buyers, and the $k_i$ arcs out of each vertex $v \in S$ as an indication of which non-buyers will be chosen by $v$.
We say that a $k$-buyer set $S$ in $V$ is {\em \kPNa{}} if $S$ is independent in the extension $R$ (note it need not be independent in $D$) and by picking $k_i$ %$\min\set{d^+(i), k}$ 
arcs out of every vertex $i$ in $S$ we obtain a subdigraph without any isolated vertex in $V \setminus S$.
When $k\ge \max_i d^{+}(i)$,  
a pure strategy Nash equilibrium is a {\em kernel} in the digraph $D$ (recall Footnote~\ref{fn:kernel}).

Whenever the value $k$ is clear from context, we will simply refer to each the above as {\em \Dr{}} sets and  {\em \PNa{}} sets respectively.
The following example illustrates buyer sets and pure Nash sets.

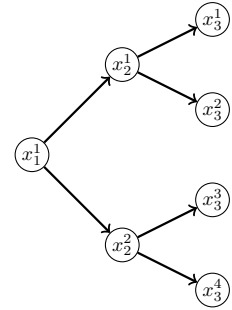
\begin{figure}[htb]
\begin{center}
%\tikzstyle{vertexBS}=[regular polygon,regular polygon sides=4,draw, top color=black!100, bottom color=black!80, minimum size=10pt, scale=0.75, inner sep=0.8pt]
%\tikzstyle{vertexWS}=[regular polygon,regular polygon sides=4,draw, minimum size=10pt, scale=0.75, inner sep=0.8pt]
%\tikzstyle{vertexBC}=[circle, draw, top color=black!100, bottom color=black!80, minimum size=10pt, scale=0.75, inner sep=0.8pt]
\tikzstyle{vertexWC}=[circle, draw, minimum size=10pt, scale=0.75, inner sep=0.8pt]
%\tikzstyle{vertexNii}=[regular polygon,regular polygon sides=5,draw, minimum size=10pt, scale=0.75, inner sep=0.8pt]
\begin{tikzpicture}[scale=0.3]
%  \node (z) at (2,-1) [vertexNii]{};
  % \node at (5,-1) {$D_1$};
  \node (x11) at (1,8) [vertexWC]{$x_1^1$}; 

  \node (x21) at (5,12) [vertexWC]{$x_2^1$};
  \node (x22) at (5,4) [vertexWC]{$x_2^2$};
  \draw[->, line width=0.03cm] (x11) -- (x21); 
  \draw[->, line width=0.03cm] (x11) -- (x22); 

  \node (x31) at (9,14) [vertexWC]{$x_3^1$};
  \node (x32) at (9,10) [vertexWC]{$x_3^2$};
  \node (x33) at (9,6) [vertexWC]{$x_3^3$};
  \node (x34) at (9,2) [vertexWC]{$x_3^4$};
  \draw[->, line width=0.03cm] (x21) -- (x31);
  \draw[->, line width=0.03cm] (x21) -- (x32); 
  \draw[->, line width=0.03cm] (x22) -- (x33);
  \draw[->, line width=0.03cm] (x22) -- (x34); 
\end{tikzpicture}
\caption{If $k=2$ then $\{x_1^1,x_2^1,x_2^2\}$ is a \Dr{} set  but not a \PNa{} set. In fact, the unique \PNa{} set is $\{x_1^1,x_3^1,x_3^2,x_3^3,x_3^4\}$.}
\label{Pic1}
\end{center}
\end{figure}

We note that, for any digraph $D$ and any sharing capacity $k$, there always exists a \Dr{} set because the entire vertex set of the digraph is one such a set.
\ifshort
However, even with unconstrained sharing, pure strategy equilibria are not guaranteed to exist; think of a directed cycle of length 3.
\fi
\iflong
However, as we saw from the example in Figure~\ref{fig:noPure}, pure strategy equilibria are not guaranteed to exist (even when sharing is not constrained).
\fi

There is a clear dichotomy in the complexity of determining the existence of a \PNa{} equilibrium for two cases. 
The first is $k = 1$ and the second is when $k \ge 2$.
We deal with each in turn, beginning with the simplest case where $k=1$.
We conclude this section by presenting some sufficient conditions for the existence of pure Nash equilibria.

\subsection{Pure equilibria when $k =1$}

For $k=1$, we observe that both equilibrium existence {\em and} the computation of deciding on existence depend on the structure of the underlying digraph. 
%More specifically, the complexity of the problem changes whether the digraph is {\em strongly connected} or not.
To obtain our results, we require some notions and tools from graph theory.

A digraph $D$ is {\em strong} if for every pair of vertices $i,j$ of $D$, there exists a directed path from $i$ to $j$ and a directed path from $j$ to $i$.
% In this section we only consider \IPNa{} sets (ie $k=1$).
%In this case the problem of determining if a digraph contains a \IPNa{} set is related to finding spanning galaxies in digraphs.
A {\em directed star}, or simply a star, consists of a central vertex, called the {\em root}, and at least one {\em leaf} vertex, with arcs from the root towards all leaves.
%with out is an arborescence with at least one arc in which the root dominates all the other vertices. 
In \citet{DG12}, a {\em galaxy} is defined as vertex-disjoint union of stars and a {\em spanning galaxy} is a galaxy containing all vertices in the digraph.
We consider the reverse of a galaxy, that we call an {\em r-galaxy}, which is a galaxy with all arcs reversed.
%the root of every star will be dominated by all its neighbours, instead of dominating them. 
If $R$ is an r-galaxy in a digraph $D$, then let $\roots(R)$ denote all the roots of the stars and let $\leaves(R)$ denote all the leaves (i.e., non-roots) in the stars. 
%\todo[inline]{add a figure showcasing a spanning galaxy and an r-galaxy?}

Before stating our first result, we introduce the following notation.
Given a digraph $D$, let $\delta^+(D)$ denote the minimum out-degree of any vertex in $D$.

\begin{theorem} \label{thmEquivI}
Let $D$ be a digraph with $\delta^+(D) \geq 1$.
If $R$ is a spanning r-galaxy in $D$ then $\leaves(R)$ is a \IPNa{} set in $D$.
Furthermore, if $S$ is a \IPNa{} set in $D$ then there exists a spanning r-galaxy, $R$, in $D$, such that $\leaves(R)=S$.
\end{theorem}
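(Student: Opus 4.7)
The plan is to exhibit a direct correspondence between spanning r-galaxies and \IPNa{} sets, identifying the leaves of the r-galaxy with the buyers and the roots with the non-buyers. Since $\delta^+(D) \geq 1$, we have $k_i = 1$ for every vertex $i$, so in any extension each buyer nominates exactly one out-neighbour. This is the fact that will tie the two notions together, because in an r-galaxy each leaf also has exactly one out-arc (to its root).

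For the first direction, I would start with a spanning r-galaxy $R$ and set $S := \leaves(R)$. For each $v \in S$, the r-galaxy contains a unique arc from $v$ to the root of its star; I would use this arc as the nomination made by $v$. The resulting extension is exactly $R$. To see that $S$ is a \IPNa{} set I must check: (i) no vertex of $V \setminus S = \roots(R)$ is isolated in the extension, which is immediate because the definition of a star requires at least one leaf, so every root receives at least one in-arc in $R$; and (ii) $S$ is independent in the extension, which follows because every arc of $R$ goes from a leaf to the root of its star, hence from $S$ into $V \setminus S$.

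For the second direction, I would take a \IPNa{} set $S$ together with a witnessing extension $R'$. Each $v \in S$ nominates exactly one out-neighbour (using $k_v = 1$), and independence of $S$ in $R'$ forces this neighbour to lie in $V \setminus S$; conversely, every vertex of $V \setminus S$ receives at least one in-arc of $R'$ by the non-isolation condition. Consequently, the arcs of $R'$ partition cleanly according to their head: for each $u \in V \setminus S$, declare $u$ the root of a star whose leaves are precisely the vertices of $S$ that nominated $u$. Every vertex of $V$ is assigned to exactly one star (with a non-empty leaf set at each root), all arcs point from leaves to the root of their star, and hence $R := R'$ is a spanning r-galaxy with $\leaves(R) = S$.

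The argument is essentially a matter of unfolding the two definitions, and I expect no substantive obstacle. The only points worth being careful about are the role of $\delta^+(D) \geq 1$ in forcing $k_i = 1$ (so that the ``one nomination per buyer'' and ``one arc per leaf'' structures line up), and the observation that a star is required to have at least one leaf, which is exactly what rules out isolated vertices in $V \setminus S$ in the first direction.
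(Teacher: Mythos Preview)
Your proposal is correct and follows essentially the same route as the paper: in both directions you identify the leaves of the r-galaxy with the buyer set and verify that the r-galaxy itself serves as the witnessing extension (and conversely). The paper's argument is terser but makes the same identifications; your version simply unpacks the two conditions (independence of $S$ and non-isolation of $V\setminus S$) more explicitly, and correctly flags where the hypothesis $\delta^+(D)\ge 1$ is used to ensure each buyer nominates exactly one out-neighbour.
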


\begin{proof}
First let $R$ be a spanning r-galaxy in $D$ and let $S=\leaves(R)$.
We note that $S$ is a \PNa{} set in $D$, as $R$ is an extension of $S$.

Now let $S$ be a \PNa{} set in $D$ and let $R$ be the subdigraph obtained by adding one arc out of each vertex of $S$, such that 
$S$ remains an independent set and there are no isolated vertices. We note that $R$ is now a spanning r-galaxy with $\leaves(R)=S$, as desired.
\end{proof}

The qualifier that $\delta^+(D) \geq 1$ is important for Theorem~\ref{thmEquivI}, because, if some vertex in $D$ has out-degree zero, then it may still belong to a \IPNa{} set, but it can never be a leaf in a spanning r-galaxy.

We will utilize the following useful results from \citet{DG12}, which were proved for spanning galaxies and therefore trivially hold for spanning r-galaxies.

\begin{theorem}[\citet{DG12}]\label{DG12results}
The following holds.
\begin{description}
\item[(a)] Deciding whether a digraph $D$ has a spanning r-galaxy is NP-complete, even when restricted to digraphs which are acyclic, planar, bipartite, subcubic, or with maximum in-degree 2.
\item[(b)] If $D$ is a  strong digraph, then $D$ contains a spanning r-galaxy if and only if $D$ contains a strong subdigraph of even order. 
\item[(c)] If $D$ is a  strong digraph containing no spanning r-galaxy then $D-v$ contains a perfect matching for all $v \in V(D)$.
\item[(d)] We can in polynomial time decide if a strong digraph contains a spanning r-galaxy.
\item[(e)] It is NP-complete to decide, given a strong digraph and one of its arc, whether there exists a spanning r-galaxy containing (resp. avoiding) this arc.
\end{description}
\end{theorem}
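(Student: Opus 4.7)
My plan is to reduce every item to the corresponding result of \citet{DG12} about spanning galaxies, by arc-reversal duality. Write $D^{\mathrm{rev}}$ for the digraph obtained from $D$ by reversing every arc. Directly from the definition of an r-galaxy, $R$ is a spanning r-galaxy of $D$ if and only if $R^{\mathrm{rev}}$ is a spanning galaxy of $D^{\mathrm{rev}}$. Hence any existential, structural, or computational statement about spanning r-galaxies in a class $\mathcal{C}$ of digraphs is equivalent to the corresponding statement about spanning galaxies in $\mathcal{C}^{\mathrm{rev}} = \{D^{\mathrm{rev}} : D \in \mathcal{C}\}$.

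Next I would check that the classes named in the theorem are invariant (or appropriately mapped) under reversal. Strong connectivity, acyclicity, planarity, bipartiteness, and subcubicness are all preserved when every arc is reversed. The only item that needs care is the bound on maximum in-degree $2$: under reversal this becomes a bound on maximum out-degree, so the DG12 statement to be invoked for part (a) is their hardness result for galaxies in that dual class. For part (e), the inclusion or avoidance of a specified arc $(u,v)$ in a spanning r-galaxy of $D$ corresponds to the same question for the arc $(v,u)$ in a spanning galaxy of $D^{\mathrm{rev}}$, so the reductions carry over verbatim.

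With this equivalence set up, the five items follow directly. Part (a) follows from DG12's NP-hardness reductions in each restricted class, applied to $D^{\mathrm{rev}}$. Part (b) follows because $D$ has a strong subdigraph of even order if and only if $D^{\mathrm{rev}}$ does (reversal preserves order and strong connectivity), and DG12 give the analogous characterization for galaxies. Part (c) follows since $D-v$ has a perfect matching iff $(D-v)^{\mathrm{rev}} = D^{\mathrm{rev}}-v$ has one. Part (d) follows by running DG12's polynomial-time algorithm on $D^{\mathrm{rev}}$, and part (e) by applying their NP-hardness reduction with the arc-orientation swapped.

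The only genuine obstacle, and a minor one, is to verify that each DG12 statement, as originally phrased for galaxies, matches the form I need after reversal; in particular, the in-degree versus out-degree pairing in (a) and the arc-inclusion/avoidance pairing in (e) need to be traced through carefully to confirm that the claimed restricted classes in this theorem are precisely those obtained by reversing DG12's classes. Once that bookkeeping is done, the theorem is immediate.
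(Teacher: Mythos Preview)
Your approach is correct and matches the paper's: the authors state that these results ``were proved for spanning galaxies and therefore trivially hold for spanning r-galaxies,'' which is exactly the arc-reversal duality you spell out. Your treatment is in fact more careful than the paper's, since you flag the in-degree/out-degree swap in part~(a) and the arc-orientation swap in part~(e) as bookkeeping to verify, whereas the paper simply asserts the transfer is trivial.
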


We now prove our first result on the existence of pure strategy equilibria when $k=1$.
The result presents a dichotomy on the complexity of deciding whether a game possesses a pure Nash equilibrium based on the structure of the underlying digraph $D$.

% When $k=1$, there is a difference in determining the complexity of a \PNa{} equilibrium for digraphs that are strong and those that are not.
%Our first result is a positive one.

\begin{theorem}
\label{thm:storng_vs_general}
When $k=1$, the problem of deciding whether there exists a pure Nash equilibrium is NP-complete.
On the other hand, the problem is polynomial-time solvable if $k=1$ and the digraph is strong.
\end{theorem}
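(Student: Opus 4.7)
The plan is to combine Theorem~\ref{thmEquivI}, which identifies pure Nash equilibria with spanning r-galaxies on digraphs of minimum out-degree at least $1$, with the relevant parts of Theorem~\ref{DG12results}.

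Membership of the decision problem in NP is immediate: a strategy profile is a polynomial-size certificate, and checking the best-response condition for each of the $n$ players takes polynomial time. For NP-hardness, I reduce from the problem of deciding whether a digraph has a spanning r-galaxy, which is NP-complete by Theorem~\ref{DG12results}(a). Given an instance $D_0$, I build $D^*$ by attaching to every sink $v$ of $D_0$ two fresh vertices $u_v, w_v$ together with the arcs $u_v \to v$, $v \to w_v$, $w_v \to u_v$ (a directed triangle through $v$). Then $\delta^+(D^*) \ge 1$, so Theorem~\ref{thmEquivI} applies. A short case analysis on the possible roles of $u_v, w_v, v$ in any spanning r-galaxy of $D^*$ shows that $u_v$ must be a root with sole leaf $w_v$, and $v$ must be a root whose leaves are drawn from $N^-_{D_0}(v)$; in particular, $v$ must have at least one in-neighbour in $D_0$. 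Consequently, spanning r-galaxies of $D^*$ restrict to spanning r-galaxies of $D_0$ and, conversely, extend by the gadget stars, so by Theorem~\ref{thmEquivI} $D^*$ has a pure Nash equilibrium iff $D_0$ has a spanning r-galaxy.

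For the polynomial-time claim on strong digraphs, the $n=1$ case is trivial (the lone player strictly prefers to buy since $c<1$). For $n \ge 2$, strong connectivity forces $\delta^+(D) \ge 1$, so Theorem~\ref{thmEquivI} equates pure Nash existence with spanning r-galaxy existence, which is polynomial-time decidable on strong digraphs by Theorem~\ref{DG12results}(d).

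The main obstacle is the case analysis for the gadget: one must rule out every r-galaxy of $D^*$ in which $v$ appears as a leaf inside the gadget, since such an r-galaxy would not restrict to a spanning r-galaxy of $D_0$. The triangle is engineered so that assuming $u_v$ is a leaf quickly yields a contradiction---its sole out-neighbour is $v$, forcing $v$ to be a root, but then $w_v$ has no valid role, since its only potential leaf $v$ is already a root and its only potential root $u_v$ is already a leaf. The cascade forces $u_v$ to be a root with $w_v$ as its unique leaf, pinning $v$ as a root with leaves drawn only from $N^-_{D_0}(v)$ and thereby securing the bijective correspondence.
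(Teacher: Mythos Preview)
Your proof is correct and follows essentially the same route as the paper. Both attach a directed $3$-cycle through each sink of the input digraph so that the resulting graph has minimum out-degree at least $1$, and both establish the bijection between pure Nash equilibria of the augmented digraph and spanning r-galaxies of the original; the paper argues directly in the language of \PNa{} sets while you pass through Theorem~\ref{thmEquivI} first, but the underlying case analysis on the gadget vertices is identical (your $u_v,w_v$ play the roles of the paper's $v_x,u_x$ respectively). The polynomial-time claim for strong digraphs is handled the same way in both, via Theorem~\ref{thmEquivI} and Theorem~\ref{DG12results}(d).
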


\begin{proof}
To prove the first part of the theorem, we reduce from the problem of finding a spanning r-galaxy in a digraph, which by Theorem~\ref{DG12results}(a) is known to be NP-hard.
Let $D$ be any digraph and let $X^+ = \{x \; | \; d^+(x)=0 \}$. Now construct $D'$ from $D$ as follows: for each vertex $x \in X^+$ add two new vertices, $u_x$ and $v_x$, and the arcs $x u_x, u_x v_x, v_x x$.
We will now show that $D'$ contains a \PNa{} set if and only if $D$ contains a spanning $r$-galaxy, which will complete the proof.

Assume that $R$ is a spanning r-galaxy in $D$ and let $S=\leaves(R) \cup \{u_x \; | \; x \in X^+ \}$. 
We note that $S$ is a \PNa{} set in $D$, as $R \cup \{u_x v_x \; | \; x \in X^+ \}$ is an extension of $S$ in $D'$.
  
Now let $S$ be a \PNa{} set in $D'$. For the sake of contradiction assume that $u_x \not\in S$ for some $x \in X^+$. 
This implies that $v_x \in S$ (as otherwise $v_x \in V(D') \setminus S$ would be isolated in an extension of $S$) 
and $x \not\in S$ (as $S$ is independent in an extension of $S$).
However this implies that $u_x$ is isolated in an extension of $S$, a contradiction.  Therefore $u_x \in S$ for all $x \in X^+$.
Analogously to above, this implies that $v_x \not\in S$ and $x \not\in S$ for all $x \in X^+$.
Let $R'$ be an extension of $S$ in $D'$ and let $R$ be the subdigraph obtained from $R'$ after deleting all arcs $u_x v_x$ for all $x \in X^+$.
We now note that $R$ is a spanning r-galaxy in $D$, which completes the proof of the first part of the theorem.

Deciding whether there exists a pure strategy Nash equilibrium when $k=1$ and the digraph is strong follows immediately from Theorem~\ref{thmEquivI} and Theorem~\ref{DG12results}(d).
\end{proof}

% \begin{theorem}\label{thm:strongPol}
% In polynomial time we can decide if a strong digraph contains a \IPNa{} set. 
% \end{theorem}

% However, in general, we have the following.

% \begin{theorem} \label{thm:NPhardK2}
% Deciding if a digraph contains a \IPNa{} set is NP-hard.
% \end{theorem}

\subsection{Pure equilibria when $k \ge 2$}

We now show that deciding whether a pure strategy Nash equilibrium exists for $k\geq 2$ is NP-hard. We emphasize that there is no qualifier along the lines of Theorem~\ref{thm:storng_vs_general} for strong digraphs as there was when $k=1$.
% That is, the following result holds even for strong digraphs.

%\todo[inline]{our proof for non strong digraphs shows that the difficulty comes from the choosing which neighbours to benefit and not from which players will buy}

\begin{theorem} \label{thm:k>=2}
For all $k\ge 2$, deciding whether a strong digraph possesses a pure strategy equilibrium is NP-complete.
\end{theorem}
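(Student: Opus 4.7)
The plan is to establish NP-hardness by reduction from \textsc{3-SAT}, with membership in NP being routine: a candidate set $S$ together with a nomination of $k_i$ out-neighbours for each $i \in S$ is verified in polynomial time.

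Given a 3-CNF formula $\varphi$ on variables $x_1, \ldots, x_n$ and clauses $C_1, \ldots, C_m$, I would construct a strong digraph $D_\varphi$ whose $k$-\PNa{} sets correspond to the satisfying assignments of $\varphi$. For each variable $x_i$ I place a two-cycle on vertices $T_i, F_i$, forcing any $k$-\PNa{} set to contain exactly one of them; its membership is interpreted as the truth value of $x_i$. For each clause $C_j$ I add a vertex $c_j$, and connect each literal occurrence to $c_j$ through a short ``wire'' gadget (for a positive occurrence of $x_i$ in $C_j$, a small chain hanging off $T_i$ whose terminal vertex must belong to $S$ exactly when $T_i \in S$, nominating $c_j$; symmetrically for negative occurrences through $F_i$), so that $c_j$ is covered in the extension iff at least one literal of $C_j$ is true. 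Using wires, rather than direct arcs from $T_i, F_i$ to $c_j$, prevents a single variable vertex from being overloaded with more clause nominations than its budget $k_i = k$ allows when a variable appears in many clauses.

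To ensure $k_i = k$ at every intended buyer vertex, each such vertex is given a small number of private absorber out-neighbours that are trivially covered once the vertex buys. The assumption $k \ge 2$ is essential here: it lets every buyer vertex simultaneously nominate its structural neighbour (cycle-mate or next wire vertex) and one further target (a clause vertex or an absorber). With $k = 1$ this is impossible, which mirrors the contrast with the polynomial-time case for strong digraphs in Theorem~\ref{thm:storng_vs_general}. To finally make $D_\varphi$ strong, I add a directed even-length back-cycle through fresh connector vertices, with arcs from each $c_j$ into the cycle and from the cycle back into the variable gadgets; the even length ensures the connector sub-digraph admits its own local \PNa{} set compatible with any choice in the main gadgets.

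The main obstacle is precisely this last step: any arc entering $c_j$ would let $c_j$ be covered for free and destroy the clause constraint, so every connectivity arc must leave $c_j$ rather than enter it, and the reverse routing into the variable gadgets has to be arranged so that no single connector vertex inadvertently dominates several clauses through its own out-arcs (which is where the wire and absorber budgeting will be reused). Once the routing is fixed, correctness follows by a routine case analysis: a satisfying assignment yields a \PNa{} set via $S = \{T_i \colon x_i \text{ true}\} \cup \{F_i \colon x_i \text{ false}\}$ extended by the forced choices in the wire, absorber, and connector gadgets, while conversely any $k$-\PNa{} set of $D_\varphi$ projects back to a satisfying assignment.
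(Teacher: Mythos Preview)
Your high-level plan is reasonable, and reducing from \textsc{3-SAT} rather than the paper's hypergraph transversal could certainly work. But the proposal does not clear the hurdle you yourself flag as the ``main obstacle'': achieving strong connectivity without breaking the gadget semantics. This is not a detail to be filled in later; it is where essentially all the difficulty lives.

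Concretely, two things fail as written. First, your two-cycle variable gadget only forces exactly one of $T_i,F_i$ into $S$ because their sole in-neighbours are each other. The moment you add arcs ``from the cycle back into the variable gadgets'' to obtain strongness, a connector vertex in $S$ can nominate $T_i$ (or $F_i$), and then neither variable vertex need buy --- the exactly-one property evaporates. Second, your private absorbers are implicitly sinks (``trivially covered once the vertex buys''), but a strong digraph has no sinks; once you give each absorber out-arcs, the absorbers of the \emph{non}-buying variable vertex are forced into $S$ and must themselves nominate $k$ out-neighbours, which can cascade into the rest of the construction. You gesture at ``reusing the wire and absorber budgeting'' to fix the routing, but nothing in the proposal shows how to route back-arcs into the gadgets without creating unwanted covering opportunities, and this is precisely the non-routine part.

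The paper sidesteps these interactions by reducing from \textsc{3-Uniform Hypergraph Transversal} and attaching, to every sink-like vertex $q$, a small gadget $\{u_q,w_q,s_q,t_q\}$ that simultaneously supplies the out-arcs needed for strongness, forces $q\notin S$ via the odd $3$-cycle $q u_q w_q q$, and routes back to the hyperedge vertices without giving them any new in-neighbours in $S$. Something of this flavour --- a local gadget that both realises strong connectivity and pins down membership in $S$ --- is what your construction is missing; without it, the reduction as sketched does not go through.
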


%\todo[inline]{Do we need to define the problem of finding a minimum transversal in a $3$-uniform hypergraph and give a reference for the fact that it is NP-hard?}

\begin{proof}
We will reduce from the NP-complete problem of deciding if a $3$-uniform hypergraph has a transversal of size at most $r$.
For an integer $p\ge 2$, a hypergraph $H$ is {\em $p$-uniform} if every (hyper)edge of $H$ has $p$ vertices. (A graph is a 2-uniform hypergraph.) A set $T$ of vertices in a hypergraph $H$ is a {\em traversal} if every edge of $H$ contains a vertex from $T.$ In the problem deciding if a $3$-uniform hypergraph has a transversal of size $r$, given a 3-uniform hypergraph $H$ and an integer $r$, we have to decide whether $H$ has a traversal with at most $r$ vertices. 
This problem is NP-hard, since the same problem for 2-uniform hypergraphs is NP-hard \cite{karp1972reducibility}. 

Let $H=(V,E)$ be a $3$-uniform hypergraph. We may assume that $H$ contains no isolated vertices (as such vertices can just be removed).
We will construct a strong digraph, $D^*$, such that $D^*$ contains a 
\kPNa{} set if and only if $H$ has a transversal of size $r$. This will complete the proof of the theorem.

We first construct a digraph, $D$, as follows. Let $V(D)=Z \cup E \cup V \cup X$, where $V=\{v_1,v_2,\ldots,v_n\}$ corresponds to the vertices in $H$ and
$E=\{e_1,e_2,\ldots,e_m\}$ correspond to the edges in $H$ and $Z$ and $X$ are independent sets such that $|Z|=(k-1)|E|$ and $|X|=k (|V|-r)$.
Now add arcs from every $e_i \in E$ to the three vertices in $V$ which belong to $e_i$ in $H$. Then, from every $e_i$  add $k-1$ arcs to vertices in $Z$ in such 
a way that every vertex in $Z$ gets in-degree one. Finally, add all the arcs from $V$ to $X$. See Figure~\ref{PicDH} for an illustration of $D$.

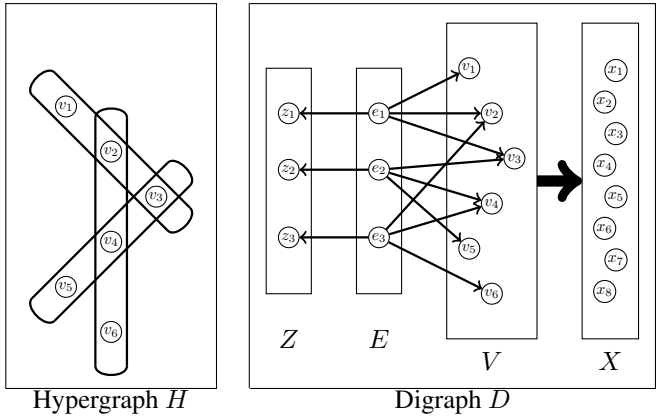
\begin{figure}[htb]
\begin{center}
\tikzstyle{vertexWC}=[circle, draw, minimum size=6pt, scale=0.6, inner sep=0.8pt]
\begin{tabular}{|c|c|c|} \cline{1-1} \cline{3-3}
\begin{tikzpicture}[scale=0.3]
%  \node at (4,13) {$H$};
  \node (v1) at (0,12) [vertexWC]{$v_1$};
  \node (v2) at (2,10) [vertexWC]{$v_2$};
  \node (v3) at (4,8) [vertexWC]{$v_3$};
  \node (v4) at (2,6) [vertexWC]{$v_4$};
  \node (v5) at (0,4) [vertexWC]{$v_5$};
  \node (v6) at (2,2) [vertexWC]{$v_6$};

  \draw[line width=0.03cm] (-0.5,13.5) -- (5.5,7.5);
  \draw[line width=0.03cm] (-1.5,12.5) -- (4.5,6.5);
  \draw[line width=0.03cm] (-1.5,12.5) to [out=135, in=135] (-0.5,13.5);
  \draw[line width=0.03cm] (5.5,7.5) to [out=315, in=315] (4.5,6.5);

  \draw[line width=0.03cm] (4.5,9.5) -- (-1.5,3.5);
  \draw[line width=0.03cm] (5.5,8.5) -- (-0.5,2.5);
  \draw[line width=0.03cm] (-1.5,3.5) to [out=225, in=225] (-0.5,2.5);
  \draw[line width=0.03cm] (4.5,9.5) to [out=45, in=45] (5.5,8.5);

  \draw[line width=0.03cm] (1.3,11.5) -- (1.3,0.5);
  \draw[line width=0.03cm] (2.7,11.5) -- (2.7,0.5);
  \draw[line width=0.03cm] (1.3,11.5) to [out=90, in=90] (2.7,11.5);
  \draw[line width=0.03cm] (1.3,0.5) to [out=270, in=270] (2.7,0.5);
\end{tikzpicture} & & % \vspace{0.4cm}
\begin{tikzpicture}[scale=0.3]
%  \node (z) at (2,-1) [vertexNii]{};
  \node at (3,14.5) { };
  \node (v1) at (1,12) [vertexWC]{$v_1$};
  \node (v2) at (2,10) [vertexWC]{$v_2$};
  \node (v3) at (3,8) [vertexWC]{$v_3$};
  \node (v4) at (2,6) [vertexWC]{$v_4$};
  \node (v5) at (1,4) [vertexWC]{$v_5$};
  \node (v6) at (2,2) [vertexWC]{$v_6$};

  \draw [] (0,0) rectangle (4,14);   \node at (2,-1) {$V$};

  \node (e1) at (-3,10) [vertexWC]{$e_1$};
  \node (e2) at (-3,7.5) [vertexWC]{$e_2$};
  \node (e3) at (-3,4.5) [vertexWC]{$e_3$};

  \draw [] (-4,2) rectangle (-2,12);   \node at (-3,0) {$E$};

  \node (z1) at (-7,10) [vertexWC]{$z_1$};
  \node (z2) at (-7,7.5) [vertexWC]{$z_2$};
  \node (z3) at (-7,4.5) [vertexWC]{$z_3$};

  \draw [] (-8,2) rectangle (-6,12);   \node at (-7,0) {$Z$};

  \node (x1) at (7.5,11.9) [vertexWC]{$x_1$};
  \node (x2) at (7,10.5) [vertexWC]{$x_2$};
  \node (x3) at (7.5,9.1) [vertexWC]{$x_3$};
  \node (x4) at (7,7.7) [vertexWC]{$x_4$};
  \node (x5) at (7.5,6.3) [vertexWC]{$x_5$};
  \node (x6) at (7,4.9) [vertexWC]{$x_6$};
  \node (x7) at (7.5,3.5) [vertexWC]{$x_7$};
  \node (x8) at (7,2.1) [vertexWC]{$x_8$};

  \draw [] (6,0) rectangle (8.5,14);   \node at (7.25,-1) {$X$};

  \draw[->, line width=0.03cm] (e1) -- (z1);
  \draw[->, line width=0.03cm] (e2) -- (z2);
  \draw[->, line width=0.03cm] (e3) -- (z3);

  \draw[->, line width=0.03cm] (e1) -- (v1);
  \draw[->, line width=0.03cm] (e1) -- (v2);
  \draw[->, line width=0.03cm] (e1) -- (v3);

  \draw[->, line width=0.03cm] (e2) -- (v3);
  \draw[->, line width=0.03cm] (e2) -- (v4);
  \draw[->, line width=0.03cm] (e2) -- (v5);

  \draw[->, line width=0.03cm] (e3) -- (v2);
  \draw[->, line width=0.03cm] (e3) -- (v4);
  \draw[->, line width=0.03cm] (e3) -- (v6);

  \draw[->, line width=0.15cm] (4,7) -- (6,7);
\end{tikzpicture}  \\ \cline{1-1} \cline{3-3}
\multicolumn{1}{c}{Hypergraph $H$} & \multicolumn{1}{c}{ } & \multicolumn{1}{c}{Digraph $D$} \\
\end{tabular}
\caption{The digraph $D$ constructed from $H$ in the proof of Theorem~\ref{thm:k>=2} when $k=2$ and $r=2$. Note that $|Z|=(k-1)\cdot|E|=1 \cdot 3$ and $|X|=k\cdot(|V|-r)= 2\cdot(6-2)=8$.}
\label{PicDH} 
\end{center} 
\end{figure}

We now construct $D^*$ from $D$. For every vertex, $q \in Z \cup X$, add the vertices $\{ u_q,w_q,s_q,t_q\}$ and the arcs $\{q u_q, u_q w_q,w_q q, qs_q, s_q t_q\}$ 
and all arcs from $t_q$ to $E$. See Figure~\ref{PicGadget} for an illustration of this gadget (ignore the different colors of the vertices for now).
Note that $D^*$ is strong as every vertex in $D$ lies on an $(E,X \cup Z)$-path and the gadgets imply the existence of paths from every vertex in $X \cup Z$ to 
every vertex in $E$.

\begin{figure}[htb]
\begin{center}
\tikzstyle{vertexBC}=[circle, draw, minimum size=6pt, scale=0.8, inner sep=0.8pt]
\tikzstyle{vertexWC}=[circle, draw, minimum size=6pt, scale=0.6, inner sep=0.8pt]
\tikzstyle{vertexWD}=[circle, draw, top color=gray!40, bottom color=gray!10, minimum size=6pt, scale=0.6, inner sep=0.8pt]
\begin{tabular}{|c|c|} \hline
\begin{tikzpicture}[scale=0.3]
  \node at (1,4.5) { };  
  \node (q) at (3,1) [vertexBC]{$q$};
  \draw [] (11,1) rectangle (13,4);   \node at (12,0) {$E$};
\end{tikzpicture}  & 
\begin{tikzpicture}[scale=0.3]
  \node at (5,4.5) { };    
  \node (q) at (3,1) [vertexBC]{$q$};
  \node (u) at (5,3) [vertexWD]{$u_q$};
  \node (w) at (1,3) [vertexWC]{$w_q$};
  \node (s) at (6,1) [vertexWD]{$s_q$};
  \node (t) at (8,3) [vertexWC]{$t_q$};
  \draw [] (11,1) rectangle (13,4);   \node at (12,0) {$E$}; 
  \draw[->, line width=0.03cm] (q) -- (u);
  \draw[->, line width=0.03cm] (u) -- (w);
  \draw[->, line width=0.03cm] (w) -- (q);
  \draw[->, line width=0.03cm] (q) -- (s);
  \draw[->, line width=0.03cm] (s) -- (t);
  \draw[->, line width=0.15cm] (t) -- (11,2.5);
\end{tikzpicture}  \\ \hline
\end{tabular}
\caption{The gadgets added in order to transform the digraph $D$ into $D^*$.}
\label{PicGadget} 
\end{center} 
\end{figure}
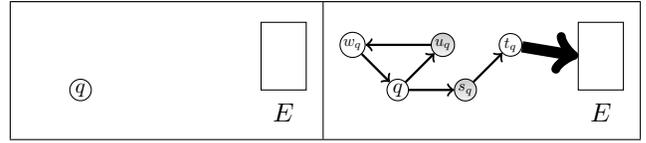

  First assume that $H$ has a transversal, $T$, of size $r$. Let every vertex in $E$ (in $D^*$) buy and choose its $k-1$ out-neighbours in $Z$ and one of 
the vertices in $T$ (which exists as every edge in $H$ contains a vertex from $T$). Let every vertex in $V \setminus T$ buy and chooses $k$ out-neighbours 
in $X$ such that every vertex in $X$ is chosen (which is possible as $|X|=k(|V|-r)$). Finally, for every $q \in Z \cup X$, we let $u_q$ and $s_q$ buy (that is,
the gray vertices in Figure~\ref{PicGadget}) and chooses $w_q$ and $t_q$, respectively.  We note that this buyer assignment is a \kPNa{} set in $D^*$.

  Conversely, assume that $S$ is a \kPNa{} set in $D^*$ (that is, $S$ is the set of buyers in a pure strategy Nash equilibrium). 
We will show that $H$ contains a transversal of size $r$, which will complete the proof. 
Let $q \in Z \cup X$ be arbitrary. If $q \in S$, then $u_q \not\in S$ (as $d_{D^*}^+(q)=2$ and $q u_q \in A(D^*)$). Therefore, $w_q \in S$ (as no in-neighbour of $w_q$
buys), which contradicts the fact that $q \in S$, as $N_{D^*}^+(w_q)=\{q\}$. So no vertex from $Z \cup X$ belongs to $S$. 

%\marginpar{might be a problem: $\imptyset$}
As no vertex from $Z \cup X$ belongs to $S$, we note that $\{u_q,s_q\} \subseteq S$ and $\{w_q,t_q\} \cap S = \emptyset$ for every $q \in Z \cup X$  (that is, 
the gray vertices in Figure~\ref{PicGadget} belong to $S$). 
This implies that $E \subseteq S$ (as no in-neighbour of any vertex in $E$ buys) and for every $e_i \in E$ it chooses $k-1$ vertices in $Z$ and one vertex in $V$, 
which implies that this vertex does not belong to $S$. So $V \setminus S$ is a transversal in $H$. 

As every vertex in $X$ must be chosen by at least one buyer, we note that at least $|V|-r$ vertices in $V$ buy (as $|X|=k(|V|-r)$. So $V \setminus S$ contains at most $r$ vertices and these
vertices form a transversal in $H$. So $H$ contains a transversal of size at most $r$ and therefore also a transversal of size exactly $r$ (just add arbitrary vertices to the transversal).  
That is, $D^*$ contains a \kPNa{} set if and only if $H$ has a transversal of size $r$, as desired.
\end{proof}

\subsection{Sufficient conditions for pure Nash equilibria}

We now present some structural properties of $D$ for which a pure strategy Nash equilibrium is guaranteed to exist.
We also show the complexity of determining whether the property is satisfied.
Before stating the result, note that we write $\Delta^+(D)$ to denote the maximum out-degree, taken over all vertices, in digraph $D$.
Also, a strong component $C$ of digraph $D$ is {\em terminal} if there is no arc from a vertex in $C$ to a vertex outside of $C$.

\iflong
\begin{theorem}
\label{thm:structural_results}
\fi
\ifshort
\begin{theorem}[$\star$]
\label{thm:structural_results}
\fi
%\todo{need to fix statement (i)}
Let $D$ be a digraph. Then, a pure Nash equilibrium exists if one of the following holds:
\begin{enumerate}[label=(\roman*)]
    \item\label{sufficient1}
    $D$ contains no odd cycles and $\Delta^+(D) \leq 1$.
    \item\label{sufficient2}
    Every terminal strong component of $D$ either contains an even cycle or it is a single vertex.
    \item $D$ is acyclic.
    \item $D$ is bipartite.
\end{enumerate}
\end{theorem}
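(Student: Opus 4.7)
I would first observe that each of (i), (iii), (iv) implies (ii), so it suffices to establish (ii). In (iii) every strong component is a singleton and the condition is vacuous; in (iv), bipartiteness forces every directed cycle to alternate between the two parts and hence have even length, so any non-trivial terminal strong component contains an even cycle; in (i), $\Delta^+(D)\le 1$ forces each non-trivial strong component to be a single directed cycle which, by the no-odd-cycle hypothesis, is even.

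To prove (ii) I would argue constructively, the workhorse being the following DAG subroutine stated as a lemma: given a DAG, scan its vertices in a topological order $v_1,\ldots,v_n$ and at each step, if $v_i$ has already been marked as nominated declare it a non-buyer, otherwise make $v_i$ a buyer and mark any $k_{v_i}$ of its out-neighbours as nominated. Because every out-neighbour of $v_i$ comes later in the order, these marked vertices will be processed as non-buyers, so a direct check confirms that the buyer set is independent in the extension and covers every non-buyer, hence is a pure Nash equilibrium.

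For general $D$ satisfying (ii) I would let $D^*$ denote the condensation and process the SCCs in reverse topological order of $D^*$ (terminal SCCs first). For a terminal SCC $C$ with $|C|\ge 2$, fix an even cycle $\gamma = v_1 v_2\cdots v_{2\ell} v_1$ in $C$, seed the assignment by declaring $v_1,v_3,\ldots,v_{2\ell-1}$ buyers (each nominating its cyclic successor) and $v_2,v_4,\ldots,v_{2\ell}$ non-buyers, and handle the remaining vertices of $C$ by a local topological-style sweep that uses the seeded non-buyers as legal nomination targets. For a terminal SCC $C=\{v\}$ with $d^+(v)=0$, defer the decision on $v$ until after the rest of $D$ is handled and then set $v$ as a non-buyer if some in-neighbour buyer ends up nominating it and as a buyer otherwise; this is safe since $k_v=0$. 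For a non-terminal SCC $C'$, processed after every SCC it points to, I would apply the DAG subroutine to $C'$ with the added freedom that ``escape'' arcs leaving $C'$ already terminate in decided non-buyers of earlier-processed SCCs and can be used as nomination targets, reducing NE on $C'$ to a DAG-style instance augmented with pre-committed targets.

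\textbf{Main obstacle.} The crux is to glue the local NEs inside terminal SCCs with the DAG sweeps through the non-terminal SCCs so that no buyer is left nominating another buyer on any arc, intra- or inter-SCC. This is precisely where the even-cycle hypothesis is indispensable: in an odd-cycle terminal SCC with no even cycle, the forced alternation around the cycle places two adjacent buyers and destroys independence in the extension. The main bookkeeping is therefore to verify the gluing at every SCC boundary, namely that seeded buyers in a terminal $C$ always have enough non-buyer out-neighbours (provided by the even cycle and the internal sweep), and that buyers promoted inside a non-terminal $C'$ can always locate $k_v$ non-buyer targets among their out-neighbours by falling back on the escape arcs into already-processed SCCs.
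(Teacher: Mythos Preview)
Your reduction of (i), (iii), (iv) to (ii) is correct and is a clean way to organise the argument. The gap is in your direct proof of (ii).

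First, neither the ``DAG subroutine'' nor the ``local topological-style sweep'' applies as stated to the vertices inside a strong component: a non-terminal SCC $C'$ is strongly connected and therefore \emph{not} a DAG, and the same holds for the vertices of a terminal SCC that lie off the seeded even cycle. You never say how you linearise these vertices, and without selecting a single out-arc per vertex (towards the cycle or an escape arc) there is no topological order to scan.

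Second, and more fatally, your claim that escape arcs ``already terminate in decided non-buyers'' is simply false under your processing order. Take $V=\{a,b,c,d\}$ with arcs $a\to b$, $b\to c$, $c\to b$, $d\to c$. The unique terminal SCC $\{b,c\}$ contains the even $2$-cycle. If you seed $b$ as buyer and $c$ as non-buyer, then $a$'s only escape arc $a\to b$ hits a buyer; if you seed $c$ as buyer and $b$ as non-buyer, then $d$'s only escape arc $d\to c$ hits a buyer. Neither seed allows both $\{a\}$ and $\{d\}$ to be handled by your non-terminal rule, yet $\{a,d\}$ with nominations $a\to b$, $d\to c$ is a perfectly good pure Nash set. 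The point is that the alternation on the even cycle cannot be fixed in advance of the rest of the digraph.

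The paper's route is different in exactly this respect. It first proves (i) by a source-peeling induction, and then for (ii) it builds a spanning subdigraph $R$ with $\Delta^+(R)\le 1$ by keeping the even cycles in terminal components and, for every other vertex, one out-arc along a path towards them; $R$ has no odd cycle, so (i) applies to $R$ globally. The induction in (i) removes in-degree-$0$ vertices (and their out-neighbours) first, so the parity on each even cycle is \emph{determined by} the incoming structure rather than pre-committed. In the example above this yields buyers $\{a,d\}$ automatically. If you want to rescue your SCC-by-SCC plan, the missing ingredient is precisely this: pick one out-arc per vertex to obtain a functional subdigraph and process sources first, letting the cycle colouring fall out at the end.
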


\iflong
\begin{proof}
    \begin{enumerate}[label=(\roman*)]
    \item 
    We will prove the theorem by induction on the number of vertices, $n$, in $D$. Clearly the theorem is true if $n$ equals 1 or 2. 
Let $D$ be a digraph containing no odd cycles and with $\Delta^+(D) \leq 1$.
If $\delta^-(D) \geq 1$ then $D$ consists of a collection of vertex-disjoint even cycles. In this case $D$ contains a \IPNa{} set by Theorem~\ref{thmEquivI} and Theorem~\ref{DG12results}(b).
So we may assume that there exists a vertex $x \in V(D)$ with $d^-(x)=0$.
If $d^+(x)=0$ then we are done by using our induction hypothesis on $D-x$ (and adding $x$ to the \PNa{} set obtained in $D-x$). So we may assume that 
$d^+(x)=1$. Let $xy$ be the unique arc leaving $x$ in $D$.

  Let $D' = D - \{x,y\}$. Clearly $D'$ contains no odd cycles and $\Delta^+(D') \leq 1$.  So by 
induction there exists a \PNa{} set, $S'$, in $D'$. Let $S = S' \cup \{x\}$.
We will show that $S$ is a \PNa{} set in $D$. Let $R'$ be an extension of $S'$ in $D'$. If any vertex $s' \in S'$ has out-degree zero in $D'$ but had an arc
into $y$ in $D$ then we add the arc $s'y$ to $R'$ and finally we add the arc $xy$ to $R'$ and call the resulting subdigraph $R$. We note that $R$ is an extension 
of $S$ in $D$, thereby showing that $S$ is a \PNa{} set in $D$, as desired.

    \item
    Let $D$ be a digraph and let $C_1,C_2,\ldots, C_r$ denote the terminal strong components of $D$. Assume that each $C_i$ is either a single vertex or
contains an even cycle. 
Start by letting $R$ be empty and then for each $i=1,2,\ldots,r$ add an even cycle from $C_i$ to $R$ if $C_i$ is not a single vertex and add $C_i$ to $R$ if $C_i$ is a single vertex. 
Note that every vertex in $V(D) \setminus V(R)$ has a path to $R$. 
So if $V(R) \not= V(D)$ then add an arc from $V(D) \setminus V(R)$ to $V(R)$ to $R$ (thereby also adding a vertex to $V(R)$). 
Continue this process until $V(R) = V(D)$.  
We now note that every vertex has out-degree one in $R$ except the vertices that belonged to $C_i$'s of order one.  
And $R$ contains no odd cycles.  
The result now follows from part~\ref{sufficient1}.

    \item
  We will now show that every acyclic digraph contains a \IPNa{}. We will show this by induction on the order of the digraph.  Clearly this is true if the 
order is 1 or 2.  Let $D$ be an acyclic digraph of order $n$. As $D$ is acyclic it must contain a vertex, $x$, with indegree 0.
If $x$ is isolated in $D$ then use induction on $D-x$ and add $x$ to the \PNa{} set obtained in $D-x$ in order to obtain a \PNa{} set in $D$.
So we may assume $x$ is not isolated and let  $D'$ be obtained from 
$D$ by removing $x$ and one of its out-neighbours, $y$. By induction there exists a \PNa{} set, $S'$, in $D'$. 
We will show that $S' \cup \{x\}$ is a \PNa{} set in $D$. Let $R'$ be an extension of $S'$ in $D'$. If any vertex $s' \in S'$ has out-degree zero in $D'$ but had an arc
into $y$ in $D$ then we add the arc $s'y$ to $R'$ and finally we add the arc $xy$ to $R'$ and call the resulting subdigraph $R$. We note that $R$ is an extension 
of $S$ in $D$, thereby showing that $S$ is a \PNa{} set in $D$, as desired.

    \item
   That every bipartite digraph contains a \IPNa{} follows from part~\ref{sufficient2}.
\end{enumerate}
\end{proof}
\fi

Before proceeding, we remark on the complexity of determining the properties above.
First, observe from parts (iii) and (iv) of Theorem~\ref{thm:structural_results} that deciding if a digraph has a spanning r-galaxy and deciding if it has a pure Nash set are fundamentally different problems (even though they are the same in many cases as per Theorem~\ref{thmEquivI}).
This holds for part (iii) because it follows from Theorem~\ref{DG12results}(a) that is is NP-hard to decide if an acyclic digraph contains a spanning $r$-galaxy.
Similarly, for part (iv) since, by  Theorem~\ref{DG12results}(a), it is NP-hard to decide if a bipartite digraph contains a spanning r-galaxy.

\section{The complexity of mixed Nash equilibria}

In this section, we turn to the issue of computing mixed strategy Nash equilibria. 
%We provide a dichotomy with respect to $k$: when $k=1$ a Nash equilibrium can be computed in quadratic time, while for 

\subsection{Mixed equilibria when $k =1$}
First we prove how to efficiently find a mixed Nash equilibrium when $k=1$.

\begin{theorem}
    \label{thm:k_one_poly}
    When $k=1$, we can compute a mixed Nash equilibrium in linear time.
\end{theorem}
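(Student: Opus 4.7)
My plan is to construct an explicit mixed Nash equilibrium via a linear-time procedure that alternates between deterministic propagation and local cycle-mixing. Write $p_v$ for the probability that $v$ buys and, conditional on buying, $\alpha_{vu}$ for the probability that $v$ nominates out-neighbour $u$; let $q_v = \prod_{u \in N^-(v)}(1 - p_u\alpha_{uv})$ be the probability that $v$ is not chosen. Because a buyer's payoff is $-c$ regardless of whom they nominate, best-response reduces to the three local conditions: $p_v = 1$ if $q_v > c$, $p_v = 0$ if $q_v < c$, and any $p_v$ is admissible when $q_v = c$.

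\textbf{Phase 1 (forced moves).} A vertex with no in-neighbour has $q_v = 1 > c$ and must buy. Starting from all such sources, run a BFS: lock each source as a buyer with an arbitrary $f(v)\in N^+(v)$ as nominee; this forces $q_{f(v)} = 0$, locking $f(v)$ as a non-buyer. A still-undetermined vertex whose last potentially-coverering in-neighbour becomes locked (all locked in-neighbours are now either non-buyers or buyers nominating elsewhere) has $q = 1$ and is enqueued as a new source. The phase ends when the queue empties, at which point every still-undetermined vertex has at least one undetermined in-neighbour and at least one still-available out-arc.

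\textbf{Phase 2 (cycle mixing).} In the residual subgraph $D'$, follow out-arcs via DFS until revisiting a vertex, producing a cycle $C$. Assign $p_v = 1-c$ for each $v\in C$, with $v$ deterministically nominating its successor in $C$. Since no locked buyer nominates into the residual (otherwise the target would have been locked as a non-buyer in Phase 1) and every cycle vertex's unique nominating in-neighbour within $C$ contributes $1-(1-c)=c$, we obtain $q_v = c$ on $C$, so mixing at rate $1-c$ is exactly optimal. Lock $C$ and restart Phase 1 on the remaining residual; alternate the two phases until every vertex is assigned, and verify the equilibrium conditions vertex-by-vertex at the end.

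\textbf{Main obstacle and complexity.} The delicate point is maintaining the invariant that, each time Phase 2 fires, no currently- or subsequently-locked vertex nominates a previously-mixed cycle vertex --- such a nomination would push that mixer's $q$ strictly below $c$ and violate indifference. The danger case is a new Phase-1 source $w$ whose only residual out-neighbours lie inside an already-mixed cycle $C$. Ruling this out, or augmenting the construction to absorb $w$ into an enlarged mixed structure (for instance by re-routing earlier nominee choices or by inserting $w$ into $C$ and re-balancing the mixing probabilities), is the technical core. Once the invariant is established, correctness is immediate from the three local conditions, and the $O(|V|+|A|)$ running time follows from a standard amortised BFS/DFS analysis in which each vertex and each arc is inspected only a constant number of times.
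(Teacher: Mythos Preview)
The obstacle you flag is not a technicality to be cleaned up later; it is exactly where your construction breaks. Take the digraph on $\{a,b,c,w\}$ with arcs $a\to b$, $b\to c$, $c\to a$, $c\to w$, $w\to a$. No vertex is a source, so Phase~1 is empty. If your DFS in Phase~2 happens to explore $c\to a$ before $c\to w$, it locks the $3$-cycle $a\to b\to c\to a$ with mixing probability $1-c$. On restart, $w$'s sole in-neighbour $c$ is now a mixer nominating $a$, so $q_w=1$ and $w$ becomes a forced buyer whose only out-neighbour is the mixed vertex $a$; nominating $a$ drives $q_a$ to $0$ and destroys $a$'s indifference. Your suggested repairs (re-routing earlier nominees, absorbing $w$ into an enlarged cycle) amount to backtracking over earlier commitments, and you give no argument that this can be organised while preserving the linear running time.

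The paper sidesteps the whole issue with one line: before doing anything else, fix for every vertex $v$ a single out-arc $v\to f(v)$ (arbitrarily), and run the entire procedure inside the resulting functional subdigraph $D'$. Source-elimination in $D'$ then removes vertices in buyer/non-buyer pairs, and what survives is literally a vertex-disjoint union of directed cycles, since every remaining vertex has out-degree exactly~$1$ and in-degree at least~$1$ in $D'$. Each cycle vertex thus has \emph{exactly one} in-neighbour in $D'$ --- its cycle predecessor --- so setting $p=1-c$ on the cycles gives $q=c$ with no possibility of later interference: every other in-neighbour in $D$ either does not buy or buys but has committed to nominating $f(\cdot)\neq v$. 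In the example above, the choice $f(c)=w$ yields a single $4$-cycle on which everyone mixes, while $f(c)=a$ makes $w$ a source in $D'$ and produces the pure equilibrium in which $w$ and $b$ buy; either way the algorithm finishes correctly in one pass. Committing to nominees \emph{first} is the missing idea in your plan.
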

\begin{proof}
    The algorithm works in three phases. In the first phase, we create a subgraph $D'$ of digraph $D$ as follows. We arbitrarily order the vertices of the digraph, and for each one we arbitrarily pick an out-neighbor, if it has an out-neighbor. The idea is that if a player buys, then, they will choose the identified out-neighbor with probability 1.
    In the second phase, we focus on $D'$ and more specifically on vertices with in-degree zero (called sources), i.e., players that have to buy in every Nash equilibrium. So, while there is a source in the remaining directed graph, we arbitrarily pick one of them and the chosen out-neighbor, if it exists.
    %, in case there is no out-neighbor then we do nothing.   
    We set the strategy of the source to buy, the strategy of the chosen out-neighbor (if any) to not buy, we delete these vertices from the digraph, and we check again for sources in the remaining graph.
    At the end of the second phase, the remainder of digraph $D'$  consists of disjoint directed cycles {\em only}. 
    For each remaining vertex, we set the probability of buying to $1-c$ and choose the designated out-neighbor.

    We argue that the constructed strategy profile is a mixed Nash equilibrium. There are three types of players depending on the strategy they play and we will consider them independently and prove that each of them plays a best response. 

    \begin{itemize}
        \item Players that buy with probability 1 incur cost $c$. Observe that the constructed strategy guarantees that for every such player there is no in-neighbor that buys and chooses her. Hence, if they deviates to not buy, they will incur cost of $1$. Hence, every such player plays a best response.
        \item Every player that buys with probability 0, must have an in-neighbor that buys with probability 1 and chooses them. Hence, every such player incurs cost 0, which is clearly a best response.
        \item For the players who mix between two actions we will calculate the cost of each action and we will prove that the expected cost in each case is optimal. Indeed, observe that the expected cost for buying is $c$; the cost is the same for every out-neighbor the player chooses. 
        Furthermore, since by construction of the strategy profile exactly one in-neighbor of the player will choose the player, the expected cost for not buying is calculated as follows. With probability $p$ the in-neighbor will buy and will pick the agent and the agent will incur cost 0. 
        With probability $1-p = 1-(1-c)=c$ the in-neighbor will not buy and the player will incur cost $1$. Hence, the expected cost for not buying is again $c$. So, the player plays with positive probability only actions that yield minimum cost and thus they have no incentives to deviate.
    \end{itemize}
Hence, the constructed strategy profile is a Nash equilibrium.
\end{proof}

\subsection{Mixed equilibria when $k \ge 2$}

Unfortunately\iflong for us, as we will argue next\fi, for every $k \geq 2$, \iflong the problem of finding a mixed Nash equilibrium becomes hopelessly intractable.
    Formally, \fi the problem is PPAD-hard even for finding an approximate Nash equilibrium, i.e., there is unlikely to exist a polynomial-time algorithm (unless PPAD=P \cite{PAPADIMITRIOU1994498}) that finds a strategy profile that every player cannot decrease their cost by ``much''.
The result essentially follows from the construction of~\citet{DODINH2024106486}, where they study public good games {\em without} the sharing constraint $k$. In their reduction, they construct a directed graph with maximum out-degree 2, hence their PPAD-hardness holds for our model, for any $k \geq 2$. 
We state the result for completeness.
\begin{theorem}[\citet{DODINH2024106486}]
    For every $k \geq 2$, finding a mixed Nash equilibrium is PPAD-hard.
\end{theorem}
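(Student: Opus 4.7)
The plan is to invoke the reduction of \citet{DODINH2024106486} essentially verbatim, and argue that the sharing constraint $k$ imposes no extra restriction on the instances they construct. Their result establishes PPAD-hardness of computing a (possibly approximate) mixed Nash equilibrium in the binary-action public-goods game on digraphs under \emph{unconstrained} sharing, i.e.\ in the model where every buyer automatically benefits all out-neighbours. A careful reading of their construction shows that the digraphs they produce have maximum out-degree at most $2$.

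Given this, the main step is a syntactic check that, on any digraph with $\Delta^+(D) \le 2$, our $k$-constrained game coincides with the unconstrained game for every $k \ge 2$. Indeed, for each vertex $i$ we have $k_i = \min\{k, d^+(i)\} = d^+(i)$, so the set of pure strategies $X_i = \{\neg\} \cup \binom{N^+(i)}{k_i} = \{\neg\} \cup \{N^+(i)\}$ contains exactly two elements: abstain, or buy and nominate the entire out-neighbourhood. Thus a buyer has no nontrivial choice of whom to share with, and the utility $\util{i}$ reduces exactly to the unconstrained public-goods payoff: a player is happy iff either they buy (paying $c$) or some in-neighbour buys (paying $0$). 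Mixed strategies in our model are therefore in bijection with mixed strategies in their model, with identical expected utilities.

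Putting the pieces together, any polynomial-time algorithm for computing a mixed (or approximate mixed) Nash equilibrium in our game when $k \ge 2$ would, when applied to the instances produced by the reduction of \citet{DODINH2024106486}, yield a corresponding equilibrium in their setting, contradicting PPAD-hardness. Hence computing a mixed Nash equilibrium for $k \ge 2$ is PPAD-hard, completing the proof.

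The only potentially delicate point is the bookkeeping: one must confirm that the bound $\Delta^+(D) \le 2$ indeed holds in the reduction of \citet{DODINH2024106486} (this is where the assumption $k \ge 2$ is used, and is the only place the value of $k$ enters) and that approximation parameters are preserved under the trivial identification of strategy spaces. Both are straightforward inspections rather than substantive obstacles, so no new machinery is needed.
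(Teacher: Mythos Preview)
Your proposal is correct and matches the paper's own justification essentially verbatim: the paper does not give a formal proof but simply observes that the reduction of \citet{DODINH2024106486} produces digraphs with maximum out-degree~$2$, so for any $k\ge 2$ the sharing constraint is vacuous and their PPAD-hardness carries over unchanged. Your write-up fleshes out this observation with the explicit identification of strategy spaces, which is exactly the right check.
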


\section{The efficiency of Nash equilibria}\label{sec:efficiency}

We now consider the efficiency of equilibria.
As before, we begin with restricting attention to pure strategies before allowing for randomisation. 
By definition, mixing can only improve welfare since every pure strategy can be viewed as a special case of a mixed one.

\subsection{Efficiency of pure strategy equilibria}

% Recall that $pn_k(D)$ and $Pn_k(D)$ denote the size of a smallest and largest $k$-pure-Nash sets in a digraph $D$ (if such sets exist) and recall that $dr_k(D)$ denotes the size of a smallest $k$-buyer set. 
% The (pure) price of stability and the (pure) price of anarchy were defined in equation \eqref{eq:purePrice} and are reproduced below as follows.
% \[
% PoS(D) = \frac{pn_k(D)}{dr_k(D)}; \hspace{.4in}PoA(D) = \frac{Pn_k(D)}{dr_k(D)}
% \]
% \begin{align}
% PoA &= \frac{Pn_k}{ dr_k}; \label{eq:purePoA} \\
% PoS &= \frac{pn_k}{dr_k} \label{eq:purePoS}
% \end{align}

% \PN{Fix $k >$ max degree.
% In any equilibrium in which a player assigns positive probability to each action, each randomising player need not assign equal probability to drive. However, each randomising player must get the same payoff in equilibrium. Because if the sum of probability of offer of a lift exceeds $c$ then a player will not drive and if sum is below $c$ then a player will drive with probability 1.
% Recall Anders' example with players $p_1, p_2, x$ and $y$, where arcs from $y$ to both $p_1$ and $p_2$, and an arc from each of $p_1$ adn $p_2$ to $x$, and finally an arc from $x$ to $y$.

% This can potentially be extended to show that every pure strategy Nash equilibrium dominates (cost wise) any equilibrium in which some subset of players are mixing.}

As with deciding on the existence of pure strategy Nash equilibria, there is a difference between the case of $k=1$ and that of $k \ge 2$.
We begin with a general result that holds for all $k$.

\iflong
\begin{theorem}\label{thm:PoS}
\fi
\ifshort
\begin{theorem}[$\star$]\label{thm:PoS}
\fi
Fix $k \ge 1$. Then 
$k\le {\rm PoS}_k \le k + \frac{1}{k+1}$.
%the price of stability is less than or equal to $k + \frac{1}{k+1}$.
\end{theorem}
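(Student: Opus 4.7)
Plan: I would prove the two inequalities separately. For the lower bound I exhibit an explicit infinite family of digraphs whose PoS ratio tends to $k$; for the upper bound I transform an optimal cover into a PNE while controlling the blow-up in buyers.

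For the lower bound $\mathrm{PoS}_k \ge k$, take $T_L$ to be the complete rooted $k$-ary tree of (even) depth $L$ with every arc oriented away from the root. Since $T_L$ is acyclic, part~(iii) of Theorem~\ref{thm:structural_results} guarantees that a PNE exists. The root $r$ has in-degree zero so it must buy in every PNE, and because $d^+(r) = k$ it is forced to nominate all $k$ of its children; the independence condition then forbids those children from being buyers. Applying this parity argument level by level, the buyers in any PNE on $T_L$ are exactly the vertices at even depth, so $pn_k(T_L) = 1 + k^2 + k^4 + \cdots + k^L$. On the other hand, the cover consisting of the root together with every odd-depth vertex is feasible---each odd-depth vertex nominates its $k$ children at the following even depth---giving $b_k(T_L) \le 1 + k + k^3 + \cdots + k^{L-1}$. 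Evaluating the geometric sums yields $pn_k(T_L)/b_k(T_L) = (k^{L+2}-1)/(k^{L+1} + k^2 - k - 1)$, which tends to $k$ as $L \to \infty$, so $\mathrm{PoS}_k \ge k$.

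For the upper bound $\mathrm{PoS}_k \le k + 1/(k+1)$, fix any $D$ admitting a PNE together with an optimal cover $S^*$ of size $b_k$ and an extension $R^*$. Call a buyer $v \in S^*$ \emph{conflicting} if $(u,v)\in R^*$ for some other buyer $u \in S^*$; conflicts are exactly what prevents $S^*$ from being a PNE. I would resolve conflicts greedily by removing a conflicting nominee from the buyer set and, if any of its previously-covered out-neighbors become uncovered, promoting them to buyers. A charging argument, exploiting that each buyer in $S^*$ is responsible for at most $k+1$ distinct vertices (itself plus its $k$ nominations), lets one pick $R^*$ so as to minimize internal nominations; the unavoidable overhead from the remaining conflicts is at most $\tfrac{1}{k+1} b_k$ extra buyers beyond the naive bound of $k\cdot b_k$, yielding a PNE of size at most $\bigl(k + \tfrac{1}{k+1}\bigr) b_k$.

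The hardest part of this plan is the tight accounting in the upper bound. A naive per-conflict analysis (remove one buyer, promote up to $k$ out-neighbors) delivers only $|S| \le k \cdot b_k$, which matches the lower bound and leaves no room for the desired $1/(k+1)$ slack. Obtaining the sharper bound requires choosing $R^*$ carefully among all optimal extensions so as to minimize nomination arcs internal to $S^*$, and arguing that the greedy promotions do not themselves trigger a cascade of new conflicts. The lower bound, by contrast, is a direct geometric-sum calculation once the forced parity alternation on the complete $k$-ary tree is identified.
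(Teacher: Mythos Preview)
Your lower bound is correct and is essentially the paper's own construction: a complete $k$-ary out-tree with the forced parity alternation, and the ratio computation is right.

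The upper bound, however, has a genuine gap. You propose to start from an optimal cover $S^*$ and massage it into a PNE by repeatedly demoting a conflicting buyer and promoting its now-uncovered nominees. You yourself flag the central difficulty---promoted vertices can create fresh conflicts---but you do not resolve it: nothing in your sketch shows the process terminates, let alone with at most $\bigl(k+\tfrac{1}{k+1}\bigr)b_k$ buyers. The ``charging argument, exploiting that each buyer in $S^*$ is responsible for at most $k+1$ vertices'' is a slogan, not a bound; it does not explain how to control the cascade or why the final set satisfies the independence condition of a PNE. Note also that your construction never uses the hypothesis that $D$ admits a PNE, so there is no mechanism preventing it from cycling on instances that happen to have one. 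Finally, your discussion of the ``naive'' $k\cdot b_k$ bound is muddled: if that bound were actually valid it would be \emph{stronger} than $k+\tfrac{1}{k+1}$, not in need of extra slack, so the real issue is that the naive count is simply unjustified once cascades are allowed.

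The paper's argument goes in the opposite direction: rather than building a small PNE from an optimal cover, it fixes a minimum PNE $S$ (whose existence is the hypothesis), sets $X=V\setminus S$, and bounds $b_k(D)$ from below. A threshold on $|X|/|S|$ splits into two cases. When $|X|/|S|$ is large the trivial bound $b_k\ge n/(k+1)$ already suffices. When $|X|/|S|$ is small one partitions $S$ into $Y_1\cup Y_2\cup Z$ according to how buyers' nominations overlap, and uses minimality of $S$ to show that $Z$ is independent in $D$ and that buyers from $X\cup Y_1$ can reach at most $k|X|$ vertices of $Z$; this yields $b_k(D)\ge |X|+|Z|-k|X|$, which is enough. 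The key idea you are missing is to work analytically from the minimum PNE rather than constructively from the optimal cover.
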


\iflong
\begin{proof}
\fi
\ifshort
\begin{proof}[Proof sketch]
The proof of the upper bound is rather technical and it is deferred to the supplementary material.
\fi
\iflong
First, we will prove the upper bound on ${\rm PoS}(D).$
Let $D$ be a digraph with $n$ vertices and let $S$ be a \kPNa{} set in $D$ with $|S|=pn_k(D)$.
Let $X = V(D) \setminus S$. Note that $X$ are the non-buyers and $S$ are the buyers in the pure Nash equilibrium.
The following two claims complete the proof of the theorem.

\2

{\bf Claim 1:} {\em If $\frac{|X|}{|S|}\geq \frac{k}{k^2+k+1}$ then $\frac{pn_k(D)}{b_k(D)} \leq k + \frac{1}{k+1}$.}

%{\em if $|S| \leq |X| \cdot \left(k+1+\frac{1}{k} \right)$, }

\2

{\bf Proof of Claim 1.} As no buyer can choose more than $k$ people we note that  $b_k(D) \geq n/(k+1)$. Therefore, the following holds.

\[
\begin{array}{rcl} \2
\frac{pn_k(D)}{b_k(D)} & \leq & \frac{|S|}{ n/(k+1) } \\ \2
     & =    &  \frac{(k+1)|S|}{|S|+|X|}  \\ \2
     & =    & \frac{k+1}{1+|X|/|S|} \\ \2
     & \leq &  \frac{k+1}{1+k/(k^2+k+1)} \\ \2
     & = & \frac{(k+1)(k^2+k+1)}{k^2+2k+1} \\ \2
     & = & \frac{k^2+k+1}{k+1} \\
     & = &  k + \frac{1}{k+1} \\
\end{array}
\]

\2

{\bf Claim 2:} {\em If $\frac{|X|}{|S|}< \frac{k}{k^2+k+1}$, then $\frac{pn_k(D)}{b_k(D)} \leq k + \frac{1}{k+1}$.}

\2

{\bf Proof of Claim 2.} We first consider the spanning subdigraph $D'$ of $D$ defined as follows. We let $V(D')=V(D)$ and for every vertex
$s \in S$, we keep the $\max\{k,d_D^+(s)\}$ arcs out of $s$ to those nominated by $s$.
All other arcs of $D$ are removed.
That is, $D'$ is obtained from $D$ by removing all arcs $uv$ where $u$ does not choose $v$. Note that in $D'$ all arcs are $(S,X)$-arcs.

For each $x \in X$ pick a vertex $y \in N_{D'}^-(x)$ arbitrarily and let $Y$ be the set of these vertices. Note that $|Y| \leq |X|$ and
$N_{D'}^+(Y) = X$. Let $Y_1$ contain all the vertices $y \in Y$ for which all neighbors of $y$ in $D'$ have in-degree 1 in $D'$.
That is, $y$ belongs to $Y_1$ if and only if everyone who is chosen by $y$ is not chosen by anyone else.
Let $Y_2 = Y \setminus Y_1$ and let $Z = S \setminus Y$.  
Also, let $X_1 = N_{D'}^+(Y_1)$ and $X_2= X \setminus X_1$. We now prove the following subclaims, where Subclaim~2.E completes the proof.

\2

{\bf Subclaim 2.A:} {\em $Z$ is an independent set in $D$.}

\2

{\bf Proof of Subclaim 2.A} For the sake of contradiction, assume that $Z$ is not independent in $D$ and $z_1 z_2$ is an arc of $D$ with 
both endpoints in $Z$.
If $z_1$ now chooses $z_2$ instead of one of the other people to whom $z_1$ is currently chooses, then $z_2$ does not need to buy, but we still have a Nash equilibrium.
That is, $S \setminus \{z_2\}$ is a \kPNa{} set in $D$, contradicting the minimality of $|S|$.

\2

{\bf Subclaim 2.B:} {\em The set of buyers in $X_1 \cup Y_1$ can choose at most $k \cdot |X_1|$ people in $Z$ in any buyer assignment.}

\2

{\bf Proof of Subclaim 2.B} Let $y \in Y_1$ be arbitrary and let $X_1^y = N_{D'}^+(y)$. 
If $y$ has no arc in $Z$ in $D$ then we note that $\{y\} \cup X_1^y$ can choose at most $k \cdot |X_1^y|$ buyers in $Z$. 

So now assume that there is an arc from $y$ to $z$ for some $z \in Z$ in $D$. If any vertex $w \in X_1^y$ has $k$ out-neighbours, $W$, in $Z$ then 
$S \cup \{w\} \setminus (W \cup \{z\})$ is a \kPNa{} set in $D$ (where $w$ chooses all people in $W$ and $y$ chooses to $z$ instead of to $w$),
contradicting the minimality of $|S|$. Therefore each vertex in $X_1^y$ can chooses at most $k-1$ people of $Z$. As $y$ is not choosing $z$
we note that the out-degree of $y$ is greater than $k$ in $D$ and therefore $|X_1^y|=k$. So $\{y\} \cup X_1^y$ can choose 
at most $k + (k-1)|X_1^y| = k \cdot |X_1^y|$ people in $Z$. 

So as in all cases, $\{y\} \cup X_1^y$ can choose at most $k \cdot |X_1^y|$ buyers in $Z$, we note that by summing over all
$y \in Y_1$ we get that the set of buyers in $X_1 \cup Y_1$ can choose at most $k \cdot |X_1|$ people in $Z$ in any buyer assignment, 
as desired.

\2

{\bf Subclaim 2.C:} {\em For every $y \in Y_2$ we have $|N_D^+(y) \cap Z | \leq 1$.}

\2

{\bf Proof of Subclaim 2.C} Assume for the sake of contradiction that $y \in Y_2$ and $|N_D^+(y) \cap Z | \geq 2$.
Let $\{z_1,z_2\} \subseteq N_D^+(y) \cap Z$ be arbitrary. As $y \not\in Y_1$ we note that some $x \in N_{D'}^+(y)$ has indegree at least two in $D'$.
Without loss of generality we may assume that either $z_1$ has no arc into $x$ in $D'$ or both $z_1$ and $z_2$ have arcs into $x$ in $D'$.
If $y$ now chooses $z_1$ instead of $x$, then $z_1$ does not need to buy, but we still have a Nash equilibrium. 
That is, $S \setminus \{z_1\}$ is a \kPNa{} set in $D$, contradicting the minimality of $|S|$.

\2

{\bf Subclaim 2.D:} {\em $b_k(D) \geq |X| + |Z|-k|X|$.}

\2

{\bf Proof of Subclaim 2.D} We consider any buyer assignment in $D$. 
By Subclaim~2.B we note that the set of buyers in $X_1 \cup Y_1$ can choose at most $k \cdot |X_1|$ people in $Z$ in this buyer assignment.
It is also clear that the set of buyers in $X_2$ can choose at most $k \cdot |X_2|$ players in $Z$. 
Therefore $X \cup Y_1$  can choose at most $k \cdot |X|$ people in $Z$.

By Subclaim~2.A and Subclaim~2.C we note that no two people in $Z$ can chosen by the same player unless the buyer is
in $X \cup Y_1$.
Let $Z^*$ denote all people in $Z$ who have not been chosen
by anybody in $X \cup Y_1$. So all people in $Z^*$ must be chosen by different players.

By the above the minimum number of buyers needed is at least $|X|$ buyers from $X \cup Y_1$ which choose $k \cdot |X|$ people in $Z$.
And an additional $|Z^*| \geq |Z| - k \cdot |X|$ offers for the players in $Z^*$.
Therefore $b_k(D) \geq |X| + (|Z|-k|X|)$, as desired.

\2

{\bf Subclaim 2.E:} {\em $\frac{pn_k(D)}{b_k(D)} \leq k + \frac{1}{k+1}$.}

\2

{\bf Proof of Subclaim 2.E} By the assumption in Claim~2 and Subclaim 2.D, we have the following:
%we have $|S| > |X| \cdot \left(k+1+\frac{1}{k} \right)$, which implies the following by

\[
\begin{array}{rcl} \2
\frac{pn_k(D)}{b_k(D)} & \leq & \frac{|S|}{ |X| + |Z|-k|X| } \\ \2
     & =    & \frac{|S|}{|X| + (|S|-|Y|)-k|X|}  \\ \2
     & \le & \frac{|S|}{|S|-k|X|}  \\ \2
     & \leq & \frac{1}{1-k|X|/|S|}  \\ \2
%     & \leq & \frac{|X| \cdot \left(k+1+\frac{1}{k} \right) }{|X| \cdot \left(k+1+\frac{1}{k} \right) -k|X|}  \\ \2
    & < & \frac{1}{1-k^2/(k^2+k+1)} \\ \2
     & \leq & \frac{k^2+k+1}{k+1} \\ 
     & = &  k + \frac{1}{k+1} \\
\end{array}
\]

% We conjecture that the bound in Theorem~\ref{thm:PoS} can be improved from $k+ \frac{1}{k+1}$ to $k$.
% While Theorem~\ref{thm:PoS} holds for all $k$, for the particular case of $k=1$, we are able to improve on Theorem~\ref{thm:PoS}.
% Before stating the result, we introduce the following notation:\ given a digraph $D$, let $\alpha'(D)$ denote the size of a maximum matching in $D$.
% We then have the following.

\fi 

Next we show that the price of stability is at least $k$.
Figure~\ref{Pic1} presents an out-tree where every non-leaf has out-degree $k=2$ and $r=3$ levels.
Consider a generalization of this example where every vertex has out-degree $k$ and there are $r$ levels, where $r$ is odd.
Comparing efficient equilibria and efficient outcomes in such structures parallels our analysis of Figure~\ref{Pic1}.
The most efficient buyer set involves forcing all vertices in the first two levels to buy and then alternating levels with non-buyers and buyers thereafter.
But this is not an equilibrium because in the unique equilibrium the root must buy and therefore those in the second level will not. This continues, meaning that each vertex in every odd-numbered level buys.
As $r$ increases in any such structure, the price of stability tends to $k$. 
\end{proof}

In fact, we believe that $k$ is the right answer for PoS.

\begin{conjecture}\label{conj1}
    If a digraph admits a pure strategy Nash equilibrium, then the price of stability is equal to $k$.
\end{conjecture}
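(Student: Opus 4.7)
The plan is to strengthen the upper bound $k + \frac{1}{k+1}$ from Theorem~\ref{thm:PoS} to $k$; combined with the lower bound already established, this yields $\mathrm{PoS}_k = k$. Equivalently, I aim to show $pn_k(D) \le k \cdot b_k(D)$ for every digraph $D$ admitting a pure Nash equilibrium.

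A first line of attack is to refine the case analysis in the proof of Theorem~\ref{thm:PoS}. A direct calculation shows that Claim~1 and Claim~2 as stated already yield ratio $\le k$ except on a narrow window $|X|/|S| \in \bigl((k-1)/k^2,\ 1/k\bigr)$ of width $1/k^2$. The loss in Claim~2 enters through the estimate $|Y| \le |X|$ in Subclaim~2.D; since each $y \in Y$ can serve up to $k$ non-buyers in the PNE extension, one expects $|Y|$ to be closer to $|X|/k$ than to $|X|$ in the extremal regime (and in the out-tree example of the lower bound it is exactly $|X|/k$). I would therefore pick $Y$ as a minimum transversal in the bipartite graph induced by the extension $R$ between $S$ and $X$, apply a K\"onig-type matching argument together with Subclaims~2.A--2.C to upper-bound $|Y|$, and plug the improved bound back into Subclaim~2.D.

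In parallel I would pursue a charging / re-routing argument. Fix a minimum PNE $S$ with extension $R$ and an optimal cover $B^*$ with an assignment $\phi: V \setminus B^* \to B^*$ of non-buyers to covering buyers. Partition $V$ into $A = S \cap B^*$, $B = S \setminus B^*$, $C = B^* \setminus S$, $W = V \setminus (S \cup B^*)$. The target $|S| \le k|B^*|$ rewrites as $|B| \le (k-1)|A| + k|C|$. The naive charge, mapping each $v \in B$ to $\phi(v)$, yields only $|B| \le k|B^*|$ and loses an $|A|$ term. The plan is to re-route the choices of $B^*$ so that whenever $b \in A$ has at least $k_b$ out-neighbours in $W$, it uses those instead of neighbours in $B$; then every $v \in B$ must be covered by some $b \in C$, giving $|B| \le k|C|$ and hence the conjecture. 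Eligibility for the re-routing is a Hall-type perfect matching question on the bipartite graph from $A$ to $W$ with capacity $k_b$ on each $b \in A$.

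The main obstacle, and likely the deepest part of the proof, is handling \emph{deficient} $b \in A$ whose out-neighbourhood is exhausted by $B \cup C$ (i.e.\ $|N^+(b) \cap W| < k_b$) and which therefore must cover into $B$ in any optimal assignment. Such a $b$ contributes an overcharge, so one would either (i) apply an exchange argument swapping $b$ out of $B^*$ in favour of one of its out-neighbours in $B$, producing a new optimal cover with smaller deficit, and iterating; or (ii) amortise overcharges against slack elsewhere, using the minimality of $|S|$ as a PNE (not merely the equilibrium conditions) to bound the size of the defect set. The fact that the lower bound $k$ is approached only in the limit and never attained suggests the final argument must recover an extra unit of buyer count whenever the ratio becomes arbitrarily close to $k$, which is where I expect invoking minimality of $S$ rather than just the PNE property to be essential.
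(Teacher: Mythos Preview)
The statement you are attempting to prove is labeled \emph{Conjecture}~\ref{conj1} in the paper, and the paper does \emph{not} prove it. The authors verify only the case $k=1$, in Theorem~\ref{priceOfStabilityI}, via a matching argument showing $pn_1(D) = n - \alpha'(D) = b_1(D)$; for general $k$ they establish only the bounds $k \le \mathrm{PoS}_k \le k + \tfrac{1}{k+1}$ in Theorem~\ref{thm:PoS} and explicitly list closing this gap as the sole remaining open case in the Discussion. So there is no ``paper's own proof'' to compare against: you are attacking an open problem.

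What you have written is accordingly a research plan, not a proof, and the gaps you yourself flag are genuine. In the first approach, sharpening $|Y| \le |X|$ to something like $|Y| \lesssim |X|/k$ via a K\"onig-type argument on the extension $R$ is a natural idea, but plugging it back into Subclaim~2.D does not by itself close the window: with $|Y| = |X|/k$ the bound $|S|/(|S| - k|X| + |X| - |Y|)$ still exceeds $k$ on an interval of $|X|/|S|$ values (for $k=2$ the bad window merely shrinks from $(1/4,1/2)$ to $(1/3,1/2)$), so further structural input beyond Subclaims~2.A--2.C is required. In the second approach, the deficient-vertex case you isolate is precisely the obstruction, and neither of your proposed resolutions is yet an argument: the exchange in (i) may create new deficiencies and need not terminate, and the amortisation in (ii) lacks any concrete potential or invariant tied to the minimality of $S$. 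Until one of these is made to work, the proposal does not establish the conjecture.
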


%Theorem~\ref{thm:POS} gives support for the conjecture.
The following theorem verifies our conjecture for $k=1$.
Before stating the result, we introduce the following notation:\ given a digraph $D$, let $\alpha'(D)$ denote the size of a maximum matching in $D$.
%We then have the following.
\iflong
\begin{theorem} \label{priceOfStabilityI}    
\fi
\ifshort
\begin{theorem}[$\star$] \label{priceOfStabilityI}
\fi
Fix $k=1$. If the model admits a pure strategy Nash equilibrium then the price of stability is equal to 1.
Moreover, the most efficient outcomes (equilibrium and non-equilibrium alike) have exactly $n - \alpha'(D)$ buyers.
% Let $D$ be a digraph of order $n$. If $D$ contains a \IPNa{} set, then $pn_1(D)=n-\alpha'(D)=dr_1(D)$.
\end{theorem}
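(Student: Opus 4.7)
My plan splits into two parts: first I would establish $b_1(D)=n-\alpha'(D)$ by a matching-based duality, and then I would show that when a pure Nash equilibrium exists one can be constructed attaining this bound, yielding $pn_1(D)=n-\alpha'(D)$ and hence $PoS_1=1$.

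For the first part, I use the following two-sided argument. For the lower bound $b_1(D)\geq n-\alpha'(D)$: given any buyer set $S$ with choice function $f$, pick for each non-buyer $v\in V\setminus S$ a single buyer $s_v\in f^{-1}(v)$; the arcs $\{(s_v,v):v\in V\setminus S\}$ have pairwise disjoint endpoints---distinct heads by construction, distinct tails because each buyer has a unique choice, and tail-head disjointness because $S\cap(V\setminus S)=\emptyset$---so they form a matching of size $n-|S|$, giving $|S|\geq n-\alpha'(D)$. For the upper bound, from a maximum matching $M$, for each edge $\{u,v\}\in M$ select an arc of $D$ realising it (say $u\to v$), make $u$ a buyer choosing $v$, and add every unmatched vertex as a buyer; this yields a buyer set of size $|M|+(n-2|M|)=n-\alpha'(D)$ in which every non-buyer is covered by its matching partner.

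For the second part, I would iteratively reduce the number of buyers in a given Nash equilibrium. Starting from any Nash buyer set $N_0$, extract the matching $M_{N_0}$ of size $n-|N_0|$ as above. If $|N_0|>n-\alpha'(D)$ then $|M_{N_0}|<\alpha'(D)$, so by Berge's theorem the underlying undirected graph of $D$ contains an $M_{N_0}$-augmenting path $P=v_0v_1\cdots v_{2k+1}$ whose endpoints are redundant buyers of $N_0$ (not the designated cover of any non-buyer). Let $M_{N_0}'=M_{N_0}\triangle P$ and orient each new matching edge on $P$ using an arc of $D$ realising it; take the new buyer set $N_0'$ to consist of the old off-$P$ buyers (with their choices preserved) together with the tails of the new matching arcs along $P$ (each choosing the head of its arc). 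A direct accounting shows the net change on $P$ is $-1$ regardless of the chosen orientations---each pair of flipped internal vertices nets zero, and the two redundant endpoints together net $-1$---and the Nash conditions hold because the heads of $M_{N_0}'$ are precisely the new non-buyers, so no new buyer is ever chosen. Iterating terminates at a Nash set of size $n-\alpha'(D)$.

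The main obstacle is global consistency: a vertex on $P$ that was a non-buyer but becomes a new buyer might be the choice of some redundant off-$P$ buyer, whose choice therefore now targets a buyer. I would handle this by choosing at each step a shortest augmenting path (so that only a bounded number of off-$P$ choices need revisiting) and re-routing each such off-$P$ buyer's choice to another non-buyer out-neighbour, which is available because in the original equilibrium those buyers already had a non-buyer out-neighbour and only a few local flips occur. Vertices of out-degree zero are treated separately: they must be buyers whenever not covered and contribute identically to $b_1(D)$ and $pn_1(D)$. Combining the two parts gives $pn_1(D)=b_1(D)=n-\alpha'(D)$, whence $PoS_1=1$ and every efficiency-maximising outcome has exactly $n-\alpha'(D)$ buyers.
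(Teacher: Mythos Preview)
Your first part, establishing $b_1(D)=n-\alpha'(D)$, is correct and essentially identical to the paper's argument.

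The second part has a genuine gap. Your augmenting-path idea is natural, and the buyer-count bookkeeping on $P$ is right: the new tails on $P$ number $k+1$ while the old buyers on $P$ numbered $k+2$, so the net change is $-1$. The problem is exactly the obstacle you flag and then wave away. When a vertex $v_i$ on $P$ flips from non-buyer to buyer, there may be an off-$P$ buyer $c$ whose \emph{only} out-neighbour is $v_i$; then $c$ cannot be re-routed, and since in the old Nash set no in-neighbour of $c$ chose $c$, you cannot demote $c$ to a non-buyer either. Your proposed remedy---take a shortest augmenting path and appeal to the fact that ``those buyers already had a non-buyer out-neighbour''---does not help: shortest paths do not bound how many off-$P$ buyers point into $P$, and the non-buyer out-neighbour they had was precisely $v_i$, which is now a buyer. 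Even when $c$ has several out-neighbours, all of them may lie on $P$ and all may have become buyers under your new assignment.

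The paper's proof confronts this difficulty head-on rather than treating it as a local repair. It partitions the extension $H$ into $X_1$ (non-buyers with in-degree $1$ in $H$), $X_2$ (in-degree $\ge 2$), $Y_1,Y_2$ (buyers pointing into $X_1,X_2$ respectively), and $Z$ (buyers with out-degree $0$). It then builds a $2$-edge-coloured graph from $H$ and a maximum matching $M$, and defines an auxiliary graph $G'$ on the restricted vertex set $Z\cup X_2\cup Y_2$ whose edges record alternating paths that start and end with $M$-edges and touch $Z\cup X_2\cup Y_2$ only at their endpoints. A counting claim shows $|E(G')|>|X_2|$, and after deleting at most one carefully chosen edge per $x\in X_2$, an edge $uv$ survives. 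A four-case analysis on the location of $u,v$ in $X_2$ versus $Z\cup Y_2$ then produces a strictly smaller $1$-pure-Nash set, using that $X_2$-vertices can afford to lose an in-arc and $Y_2$-vertices point into redundantly covered targets. In other words, the paper restricts the alternating-path swap to exactly those vertices where the re-routing you need is guaranteed to succeed; this restriction is the substance of the argument, not a detail that can be patched in afterwards.
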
 

\iflong
\begin{proof}
\fi
\ifshort
\begin{proof}[Proof sketch]
\fi
Assume that the digraph $D$ contains a \IPNa{} set. We first show that $n-\alpha'(D)=b_k(D)$.
Let $M$ be a maximum matching in $D$ and let $S$ denote all the heads of the arcs in $M$. Now $V(D) \setminus S$ is a \IDr{} set
in $D$ of size $n-|S| = n-\alpha'(D)$ (where each vertex in $S$ gets chosen by its in-neighbour in $M$), implying that $b_1(D) \leq n-\alpha'(D)$.

Conversely, let $R$ be an extension of a \IDr{} set, $Q$, where $|Q|=b_1(D)$. For each vertex $x \in V(D) \setminus Q$ we can pick an arc into $x$ from $R$.
Note, these $n-|Q|$ arcs form a matching, implying that $\alpha'(D) \geq n-b_1(D)$. As $b_1(D) \leq n-\alpha'(D)$, we obtain $b_1(D) = n-\alpha'(D)$.

We will now prove that $pn_1(D)=n-\alpha'(D)$. First let  $H$ be an extension of a \IPNa{} set, $S$, where $|S|=pn_1(D)$. 
For each vertex $x \in V(D) \setminus S$ we can pick an arc into $x$ from $H$.
Note that these $n-|S|$ arcs form a matching, implying that $\alpha'(D) \geq n-pn_1(D)$. 
Or, alternatively, that $pn_1(D) \geq n - \alpha'(D)$. 

Next we now prove that $pn_1(D) \leq n - \alpha'(D)$, which will complete the proof of the theorem.
For the sake of contradiction assume that $pn_1(D) > n - \alpha'(D)$.
Let $M$ be a maximum matching in $D$ and let $H$ be an extension of a \IPNa{} set, $S$, where $|S|=pn_1(D)$. 
Let $X_1$ be the vertices in $H$ with in-degree one and let $X_2$ be the vertices in $H$ with in-degree at least 2.
Let $Y_1$ be the vertices in $H$ with an arc into $X_1$ and let $Y_2$ be the vertices in $H$ with an arc into $X_2$ 
and let $Z$ be the isolated vertices in $H$ (see Figure~\ref{Pic2}).

\begin{figure}[htb]
\begin{center}
%\tikzstyle{vertexBS}=[regular polygon,regular polygon sides=4,draw, top color=black!100, bottom color=black!80, minimum size=10pt, scale=0.75, inner sep=0.8pt]
%\tikzstyle{vertexWS}=[regular polygon,regular polygon sides=4,draw, minimum size=10pt, scale=0.75, inner sep=0.8pt]
%\tikzstyle{vertexBC}=[circle, draw, top color=black!100, bottom color=black!80, minimum size=10pt, scale=0.75, inner sep=0.8pt]
\tikzstyle{vertexWC}=[circle, draw, minimum size=10pt, scale=0.75, inner sep=0.8pt]
%\tikzstyle{vertexNii}=[regular polygon,regular polygon sides=5,draw, minimum size=10pt, scale=0.75, inner sep=0.8pt]
\begin{tikzpicture}[scale=0.24]
%  \node (z) at (2,-1) [vertexNii]{};
%  \node at (5,-1) {$D_1$};

  \node (z1) at (-9,1) [vertexWC]{}; 
  \node (z2) at (-7,1) [vertexWC]{}; 
  \node (z3) at (-5,1) [vertexWC]{}; 
  \draw [] (-10,0) rectangle (-4,2);   \node at (-10.8,1) {$Z$};

  \node (x1) at (2,8) [vertexWC]{};
  \node (y1) at (1,1) [vertexWC]{}; \draw[->, line width=0.03cm] (y1) -- (x1);
  \node (y2) at (3,1) [vertexWC]{}; \draw[->, line width=0.03cm] (y2) -- (x1);

  \node (x2) at (7,8) [vertexWC]{};
  \node (y3) at (5,1) [vertexWC]{}; \draw[->, line width=0.03cm] (y3) -- (x2);
  \node (y4) at (7,1) [vertexWC]{}; \draw[->, line width=0.03cm] (y4) -- (x2);
  \node (y5) at (9,1) [vertexWC]{}; \draw[->, line width=0.03cm] (y5) -- (x2);

  \node (x3) at (12,8) [vertexWC]{};
  \node (y6) at (11,1) [vertexWC]{}; \draw[->, line width=0.03cm] (y6) -- (x3);
  \node (y7) at (13,1) [vertexWC]{}; \draw[->, line width=0.03cm] (y7) -- (x3);

  \node (x4) at (17,8) [vertexWC]{};
  \node (y8) at (17,1) [vertexWC]{}; \draw[->, line width=0.03cm] (y8) -- (x4);

  \node (x5) at (20,8) [vertexWC]{};
  \node (y9) at (20,1) [vertexWC]{}; \draw[->, line width=0.03cm] (y9) -- (x5);

\draw [] (0,0) rectangle (14,2);   \node at (-1,1) {$Y_2$};
\draw [] (0,7) rectangle (14,9);   \node at (-1,8) {$X_2$};
\draw [] (16,0) rectangle (21,2);  \node at (22,1) {$Y_1$};
\draw [] (16,7) rectangle (21,9);  \node at (22,8) {$X_1$};
\end{tikzpicture} 
\caption{An illustration of $Z$, $X_1$, $X_2$, $Y_1$ and $Y_2$ in the proof of Theorem~\ref{priceOfStabilityI}, where the digraph $H$ is shown above.
Note that $S=Z \cup Y_1 \cup Y_2$.}
\label{Pic2}
\end{center}
\end{figure}
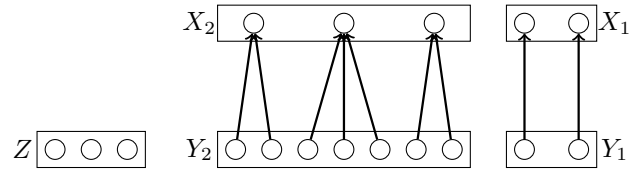

  We first consider the $2$-edge-colored graph, $G$, defined as follows. Let $V(G)=V(D)$ and let the edges in the underlying undirected graph of $H$ have color $1$ in $G$.
Let the underlying undirected edges of the matching, $M$, have color 2 in $G$. This defines the $2$-edge-colored multi-graph $G$ (there could be both an edge of color 1 and 
an edge of color 2 between 2 vertices).

We then define the graph $G'$ as follows. Let $V(G') = Z \cup X_2 \cup Y_2$ and define the edge-set of $G'$, such that $ab \in E(G')$ if and only if there 
is an alternating path between $a$ and $b$ in $G$ starting and ending with edges of color 2 and such that $a$ and $b$ are the only vertices on the path 
from $Z \cup X_2 \cup Y_2$. We need the following claim.

\2

{\bf Claim A\ifshort $(\star)$ \fi:} $|E(G')| > |X_2|$.

\2
\iflong
{\em Proof of Claim A:}
Recall that we, for the sake of contradiction, assumed that $ pn_1(D) > n - \alpha'(D)$, which implies that the following holds
\begin{align*}
    |Y_1|+|Y_2| + |Z| = |S| = pn_1(D) > n - \alpha'(D) = n - |M| 
\end{align*}
which in turn implies that
\begin{align}
\label{eq:aux_A}
2 |Y_2| > 2n - 2|M| - 2|Y_1| - 2|Z| 
\end{align}

% \[
% \begin{array}{lcc}
%            & |Y_1|+|Y_2| + |Z| = |S| = pn_1(D) > n - \alpha'(D) = n - |M| & \\
% \Downarrow &                                                              & \\
%            & 2 |Y_2| > 2n - 2|M| - 2|Y_1| - 2|Z|                          & (A) \\ 
% \end{array}
% \]

Let $R_M$ denote all vertices in $V(D)$ that are not saturated by the matching $M$.
Let $r = |R_M| = n - 2|M|$.
Let $u \in V(G')$ ($= Z \cup X_2 \cup Y_2$) be arbitrary. 
We note that we may start an alternating path in $u$ (using the colors in $G'$) and continue until we either return to $V(G')$ or end in a vertex in $R_M$.
If we return to $V(G')$ then this will give rise to an edge in $G'$ (incident with $u$).
If we end in a vertex $v \in R_M$ ($u=v$ is possible) then we note that if we had picked a different vertex than $u$ to start from then we can not end in $v$.
So the $r$ vertices in $R_M$ can cause at most $r$ vertices in $V(G')$ to have degree zero. So at least $|V(G')| - r$ vertices in $G'$ have degree at
least one (in fact degree equal to one as it is not difficult to see that $\Delta(G') \leq 1$). 

Therefore, the following holds, by \eqref{eq:aux_A} above and the fact that $|X_1|=|Y_1|$.

\[
\begin{array}{rcl} \2
|E(G')| & \geq & \frac{|V(G')| - r}{2} \\ \2
 & = & \frac{|Z| + |X_2|+|Y_2| - (n - 2|M|)}{2} \\ \2
 & > & \frac{|Z| + |X_2|+ (2n - 2|M| - 2|Y_1| - 2|Z|) - |Y_2| - (n - 2|M|)}{2} \\ \2
 & = & \frac{|X_2|+ n - 2|Y_1|  - |Z| - |Y_2|}{2} \\ \2
 & = & \frac{|X_2| + |Z| + |X_1| + |X_2| + |Y_1| + |Y_2|  - 2|Y_1| - |Z| - |Y_2|}{2} \\ \2
 & = & \frac{|X_2|+ |X_2|}{2} \\ \2
 & = & |X_2| \\ 
\end{array} 
\]

 This completes the proof of Claim~A.

\2
\fi

We now return to the proof of Theorem~\ref{priceOfStabilityI}.
Let $G''$ be obtained from $G'$ by deleting at most $|X_2|$ edges in the following way.  For each $x \in X_2$ we delete at most one edge using the below 
approach.

\begin{itemize}
 \item If $d_{H}^-(x) = 2$ and there is an edge in $G'$ between the two vertices in $N_H^-(x)$ then delete this edge from $G'$.
 \item If we did not delete an edge above, then delete the edge incident with $x$ in $G'$ (if such an edge exists).
\end{itemize}

Let $G''$ denote the resulting graph and note that by Claim~A we have $|E(G'')| \geq |E(G')| - |X_2| > 0$. So there exists an edge $uv \in E(G'')$.

\iflong
For every edge, $e$, in $G'$, we say that we swap the edges on the alternating path in $G$ corresponding to $e$ if we add the arcs from $M$ lying on the path to $H$ and remove the arcs from $H$ that lie on the path. We will in the below cases show how to do this such that the resulting digraph is an extension of a \IPNa{} set.
\fi
%Consider the following 4 possibilities, which exhaust all possibilities; see Figure~\ref{Pic3}. 

In all four cases below, which exhaust all possibilities (see Fig.~\ref{Pic3}), we \ifshort can \fi obtain a contradiction (by finding a \IPNa{} set of smaller size than $S$).
\begin{description}
 \item[Case 1\ifshort$(\star)$\fi:] $|\{u,v\} \cap X_2 |=2$. 
 \iflong
 As the edge $uv$ was not deleted from $G'$ when constructing $G''$ we note that 
$d_{H}^-(u) = 2$ and $d_{H}^-(v) = 2$ and there is an edge, $e_u$, in $G'$ between the two vertices in $N_H^-(u)$ and an edge, $e_v$, between the two vertices in $N_H^-(v)$ (See Figure~\ref{Pic3}). 
Swapping edges on the three alternating paths corresponding to the edges $uv$, $e_u$ and $e_v$ in $G'$ (and deleting the arcs into $u$ and $v$ in $H$) 
gives us a an extension of a \IPNa{} set of size smaller than $S$, a contradiction.
\fi

 \item[Case 2\ifshort$(\star)$\fi:] $|\{u,v\} \cap X_2 |=1$. 
 \iflong
 Without loss of generality assume that $u \in X_2$ and $v \not\in X_2$. 
As the edge $uv$ was not deleted from $G'$ when constructing $G''$ we note that                          
$d_{H}^-(u) = 2$ and there is an edge, $e_u$, in $G'$ between the two vertices in $N_H^-(u)$ (See Figure~\ref{Pic3}).
Swapping edges on the two alternating paths corresponding to the edges $uv$ and $e_u$ in $G'$ (and deleting the arcs incident to $u$ and $v$ in $H$) gives us a an extension of a 
\IPNa{} set of size smaller than $S$, a contradiction.
\fi

 \item[Case 3\ifshort$(\star)$\fi:] $|\{u,v\} \cap X_2 |=0$ and $N_H^+(u) = N_H^+(v) \not= \emptyset$. 
 \iflong
 In this case define $x$ such that $N_H^+(u) = \{x\} = N_H^+(v)$. We note
that $d_{H}^-(x) \geq  3$, as the edge $uv$ was not deleted from $G'$ when constructing $G''$ (See Figure~\ref{Pic3}).
 Swapping edges on the alternating path corresponding to the edge $uv$ in $G'$ (and deleting the arcs incident to $u$ and $v$ in $H$) 
gives us a an extension of a \IPNa{} set of size smaller  than $S$, a contradiction.
\fi

 \item[Case 4\ifshort$(\star)$\fi:] $|\{u,v\} \cap X_2 |=0$ and $N_H^+(u) \not= N_H^+(v)$ or $N_H^+(u) = N_H^+(v) = \emptyset$. 
 \iflong
Swapping edges on the alternating path corresponding to the edge $uv$ in $G'$ (and deleting the arcs incident to $u$ and $v$ in $H$) 
gives us a an extension of a \IPNa{} set of size smaller  than $S$, a contradiction.
\fi
\end{description}

\begin{figure}[htb]
\begin{center}
%\tikzstyle{vertexBS}=[regular polygon,regular polygon sides=4,draw, top color=black!100, bottom color=black!80, minimum size=10pt, scale=0.75, inner sep=0.8pt]
%\tikzstyle{vertexWS}=[regular polygon,regular polygon sides=4,draw, minimum size=10pt, scale=0.75, inner sep=0.8pt]
%\tikzstyle{vertexBC}=[circle, draw, top color=black!100, bottom color=black!80, minimum size=10pt, scale=0.75, inner sep=0.8pt]
\tikzstyle{vertexWC}=[circle, draw, minimum size=6pt, scale=0.75, inner sep=0.8pt]
%\tikzstyle{vertexNii}=[regular polygon,regular polygon sides=5,draw, minimum size=10pt, scale=0.75, inner sep=0.8pt]
\begin{tabular}{|c|c|c|c|} \hline
\begin{tikzpicture}[scale=0.2]
%  \node (z) at (2,-1) [vertexNii]{};
  \node at (1,9) { };

  \node at (4,-1) {(Case 1)};

  \node (x1) at (2,8) [vertexWC]{};
  \node (y1) at (1,1) [vertexWC]{}; \draw[->, line width=0.03cm] (y1) -- (x1);
  \node (y2) at (3,1) [vertexWC]{}; \draw[->, line width=0.03cm] (y2) -- (x1);

  \node (x2) at (6,8) [vertexWC]{};
  \node (y3) at (5,1) [vertexWC]{}; \draw[->, line width=0.03cm] (y3) -- (x2);
  \node (y4) at (7,1) [vertexWC]{}; \draw[->, line width=0.03cm] (y4) -- (x2);

  \draw[dotted, line width=0.03cm] (y1) -- (y2);
  \draw[dotted, line width=0.03cm] (y3) -- (y4);
  \draw[dotted, line width=0.03cm] (x1) -- (x2);

\end{tikzpicture} \hspace{0.2cm} & 
\begin{tikzpicture}[scale=0.20]
%  \node (z) at (2,-1) [vertexNii]{};
  \node at (1,9) { };

  \node at (5,-1) {(Case 2)};

  \node (x1) at (2,8) [vertexWC]{};
  \node (y1) at (1,1) [vertexWC]{}; \draw[->, line width=0.03cm] (y1) -- (x1);
  \node (y2) at (3,1) [vertexWC]{}; \draw[->, line width=0.03cm] (y2) -- (x1);

  \node (x2) at (7,8) [vertexWC]{};
  \node (y3) at (5,1) [vertexWC]{}; \draw[->, line width=0.03cm] (y3) -- (x2);
  \node (y4) at (7,1) [vertexWC]{}; \draw[->, line width=0.03cm] (y4) -- (x2);
  \node (y5) at (9,1) [vertexWC]{}; \draw[->, line width=0.03cm] (y5) -- (x2);

  \draw[dotted, line width=0.03cm] (y1) -- (y2);
  \draw[dotted, line width=0.03cm] (x1) -- (y3);
\end{tikzpicture} \hspace{0.2cm} &
\begin{tikzpicture}[scale=0.20]
%  \node (z) at (2,-1) [vertexNii]{};
  \node at (1,9) { };

  \node at (0.5,-1) {(Case 2)};

  \node (x1) at (2,8) [vertexWC]{};
  \node (y1) at (1,1) [vertexWC]{}; \draw[->, line width=0.03cm] (y1) -- (x1);
  \node (y2) at (3,1) [vertexWC]{}; \draw[->, line width=0.03cm] (y2) -- (x1);

  \node (y3) at (-2,1) [vertexWC]{}; 

  \draw[dotted, line width=0.03cm] (y1) -- (y2);
  \draw[dotted, line width=0.03cm] (x1) -- (y3);
\end{tikzpicture} \hspace{0.2cm} &
\begin{tikzpicture}[scale=0.20]
%  \node (z) at (2,-1) [vertexNii]{};
  \node at (1,9) { };

  \node at (5,-1) {(Case 3)};

  \node (x1) at (4,8) [vertexWC]{};
  \node (y1) at (1,1) [vertexWC]{}; \draw[->, line width=0.03cm] (y1) -- (x1);
  \node (y2) at (3,1) [vertexWC]{}; \draw[->, line width=0.03cm] (y2) -- (x1);
  \node (y3) at (5,1) [vertexWC]{}; \draw[->, line width=0.03cm] (y3) -- (x1);
  \node (y4) at (7,1) [vertexWC]{}; \draw[->, line width=0.03cm] (y4) -- (x1);
  \draw[dotted, line width=0.03cm] (y1) -- (y2);
\end{tikzpicture} \\  \hline
\begin{tikzpicture}[scale=0.20]
%  \node (z) at (2,-1) [vertexNii]{};
  \node at (1,9) { };

  \node at (5.5,-1) {(Case 4)};

  \node (y2) at (2,1) [vertexWC]{}; 

  \node (x2) at (7,8) [vertexWC]{};
  \node (y3) at (5,1) [vertexWC]{}; \draw[->, line width=0.03cm] (y3) -- (x2);
  \node (y4) at (7,1) [vertexWC]{}; \draw[->, line width=0.03cm] (y4) -- (x2);
  \node (y5) at (9,1) [vertexWC]{}; \draw[->, line width=0.03cm] (y5) -- (x2);

  \draw[dotted, line width=0.03cm] (y2) -- (y3);
\end{tikzpicture}  & 
\begin{tikzpicture}[scale=0.20]
%  \node (z) at (2,-1) [vertexNii]{};
  \node at (1,9) { };

  \node at (3,-1) {(Case 4)};

  \node (y2) at (1,1) [vertexWC]{};

  \node (y3) at (5,1) [vertexWC]{}; 
  \draw[dotted, line width=0.03cm] (y2) -- (y3);
\end{tikzpicture}  &  
\multicolumn{2}{c|}{\begin{tikzpicture}[scale=0.20]
%  \node (z) at (2,-1) [vertexNii]{};
  \node at (1,9) { };

  \node at (4.5,-1) {(Case 4)};

  \node (x1) at (1,8) [vertexWC]{};
  \node (y1) at (0,1) [vertexWC]{}; \draw[->, line width=0.03cm] (y1) -- (x1);
  \node (y2) at (2,1) [vertexWC]{}; \draw[->, line width=0.03cm] (y2) -- (x1);

  \node (x2) at (7,8) [vertexWC]{};
  \node (y3) at (5,1) [vertexWC]{}; \draw[->, line width=0.03cm] (y3) -- (x2);
  \node (y4) at (7,1) [vertexWC]{}; \draw[->, line width=0.03cm] (y4) -- (x2);
  \node (y5) at (9,1) [vertexWC]{}; \draw[->, line width=0.03cm] (y5) -- (x2);

  \draw[dotted, line width=0.03cm] (y2) -- (y3);
\end{tikzpicture} } \\  \hline
\end{tabular}
\caption{An illustration of the four cases in the proof of Theorem~\ref{priceOfStabilityI}, where the dotted edges indicate edges in $G'$ and the arcs 
indicate arcs in $H$.}
\label{Pic3}
\end{center}
\end{figure}
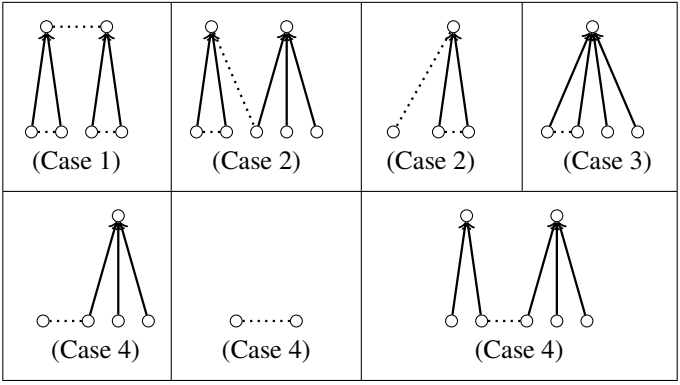

%  In all the above cases we \ifshort can \fi obtain a contradiction (by finding a \IPNa{} set of smaller size than $S$).
This \iflong contradiction \fi implies that $pn_1(D) > n - \alpha'(D)$ is false, and therefore $pn_1(D) \leq n - \alpha'(D)$, as desired.
\end{proof}

The pure price of anarchy is more straightforward.

\begin{theorem}\label{thm:anarchy}
Fix $k \ge 1$. Then the pure price of anarchy is equal to $k + 1$.
%Moreover, for any $\epsilon>0$, one can construct a digraph such that the pure price of anarchy is greater than or equal to $k+1 - \epsilon$.
\end{theorem}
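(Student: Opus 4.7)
The plan is to prove ${\rm PoA}_k = k+1$ by matching upper and lower bounds. The upper bound is a direct counting argument, while the lower bound requires a family of digraphs whose worst pure Nash equilibria approach this ratio.

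For the upper bound, I observe that any covering buyer set $S$ in $D$ (equilibrium or not) satisfies $n \leq (k+1)|S|$, since each buyer covers itself plus at most $k$ nominated out-neighbors. Applied to an optimal covering this gives $b_k(D) \geq n/(k+1)$, while $Pn_k(D) \leq n$ holds trivially. Dividing yields $Pn_k(D)/b_k(D) \leq k+1$ for every $D$ admitting a pure Nash equilibrium.

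For the matching lower bound, I exhibit a family of digraphs parametrized by $m$. Let $R = \{r_1, \ldots, r_m\}$ and $C = \{c_1, \ldots, c_{km}\}$, giving $n = (k+1)m$ vertices. The arc set is all pairs $r_i \to c_j$ (complete bipartite from $R$ to $C$, with no other arcs). Every $r_i$ has in-degree $0$, so each $r_i$ must buy in every strategy profile. The efficient covering has all $r_i$ buying and nominating a partition of $C$ into $m$ disjoint $k$-subsets, certifying $b_k(D) = m$. For the worst equilibrium I will consider the profile where all $r_i$ buy and every $r_i$ nominates the identical subset $\{c_1, \ldots, c_k\}$, forcing $c_{k+1}, \ldots, c_{km}$ to buy. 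I will verify this is a Nash equilibrium by noting that each $r_i$ is indifferent among all $k$-subsets of $C$; each $c_j$ with $j \leq k$ is covered and strictly prefers utility $0$ to $-c$; and each $c_j$ with $j > k$ strictly prefers utility $-c$ to $-1$. The resulting buyer count is $m + (km - k)$, so $Pn_k(D)/b_k(D) = k+1 - k/m$, which tends to $k+1$ as $m \to \infty$. Together with the upper bound this yields ${\rm PoA}_k = k+1$. I do not anticipate a real obstacle; the only step meriting care is verifying that the coordinated suboptimal nomination is indeed a pure Nash equilibrium, which reduces to the three best-response checks above.
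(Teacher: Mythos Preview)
Your proof is correct and follows essentially the same approach as the paper: the counting upper bound $Pn_k(D)/b_k(D)\le (k+1)$ is identical, and for the lower bound you exhibit a family of digraphs admitting a pure Nash equilibrium in which all buyers nominate the same $k$ vertices, forcing the ratio toward $k+1$. The only cosmetic difference is the choice of family---the paper uses the complete digraph on $n$ vertices (worst equilibrium has $n-k$ buyers, optimum $\lceil n/(k+1)\rceil$), whereas you use a complete bipartite digraph from $m$ sources to $km$ sinks; your version has the minor advantage that $b_k(D)=m$ is immediate from the sources being forced buyers.
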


\begin{proof}
Consider a complete digraph on $n$ vertices and fix sharing capacity at $k<n$.
The least efficient Nash equilibrium has $n-k$ buyers each choosing the same $k$ non-buyers.
In contrast, at the most efficient outcome each buyer chooses $k$ non-buyers, and so has buyer set of size $\frac{n}{k+1}$, when $n$ is divisible by $k+1$ and round up otherwise.
The ratio of these terms equals the price of anarchy and tends to $k+1$ as $n$ gets large.
Moreover, the price of anarchy is always bounded from above by $k+1$ because there cannot be more than $n$ buyers in any pure Nash equilibrium and there must be at least $n/(k+1)$ buyers.
\end{proof}

% \subsubsection{Price of stability when $k \ge 2$}

\subsection{Efficiency of mixed strategy equilibria}

% We now consider efficiency when mixed strategies are allowed.
% Clearly the efficiency of outcomes attained can only increase since pure strategies are a special case of mixed ones meaning each of the terms in pure price of stability and pure price of anarchy (see the expressions in \eqref{eq:purePoAPoS}) can now be optimisd over a strictly larger set.

% Recall that $pn_k(D)$ denotes the size of a smallest \kPNa{} set in a digraph $D$ (if such a set exists) and $dr_k(D)$ denotes the size of a smallest
% \kDr{} set in $D$.
% We assume the price of driving is $c$ and if somebody does not drive or get a lift there is a penalty of -1.
% So the ``gain'' of a of a smallest \kPNa{} set in a digraph $D$ is $-pn_k(D) \cdot c$ (as everybody either drives or gets a lift).

{Extending our terminology from that used for pure strategy outcomes, the minimum cost of a mixed Nash equilibrium is denoted by $mpn_k(D)$, and the corresponding minimum cost of a mixed buyer assignment \iflong (which does not need to be Nash)\fi is denoted by $mb_k(D)$.}

% Below, we will consider $|mpn_k(D)|$ and $|mdr_k(D)|$ instead of $mpn_k(D)$ and $mdr_k(D)$. That is we want to minimize our penalty (which is $c$ for driving and 1 for not getting to the destination) instead of maximizing the gain.
% Note that this is the same problem.
\iflong
We have the following result on the mixed price of stability.
\fi

\iflong
\begin{theorem} \label{ThmSupportConj}
\fi
\ifshort
\begin{theorem}[$\star$] \label{ThmSupportConj}
\fi
Let $k \geq 1$.
Then the mixed price of stability is equal to $k+1$.
%Moreover, for any $\epsilon>0$, one can construct a digraph such that the mixed price of stability is greater than or equal to $k+1 - \epsilon$.
% \[
% \frac{mpn_k(D)}{mdr_k(D)} \leq k+1
% \]
% And for any $\epsilon>0$ there exists a digraph $D$ such that $\frac{mpn_k(D)}{mdr_k(D)} \geq (k+1)(1-\epsilon)$.
\end{theorem}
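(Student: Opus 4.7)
The plan is to show both $\sup_D mpn_k(D)/mb_k(D) \le k+1$ and $\sup_D mpn_k(D)/mb_k(D) \ge k+1$.

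For the upper bound, I would prove $mpn_k(D) \le (k+1)\, mb_k(D)$ for every digraph $D$ via two inequalities. First, since deviating to pure-buying is always available and yields cost $c$, in any mixed Nash equilibrium each player's expected cost is at most $c$, so $mpn_k(D) \le nc$. Second, writing the cost of any mixed buyer assignment as $\sum_i [q_i c + (1-q_i)(1 - r_i)]$, with $q_i$ and $r_i$ the probabilities that $i$ buys, respectively that $i$ is served given it does not buy, and using that the expected number of served non-buyers is at most $kB$, where $B = \sum_i q_i$ (each buyer covers at most $k$ out-neighbours), yields cost $\ge n - (k+1-c)B$ when $B \le n/(k+1)$ and cost $\ge cB$ otherwise; both bounds meet at $B = n/(k+1)$ with common value $nc/(k+1)$, so $mb_k(D) \ge nc/(k+1)$. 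Dividing gives $mpn_k(D)/mb_k(D) \le k+1$.

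For the lower bound, I would exhibit a family of digraphs whose ratio approaches $k+1$. For $k = 1$, the natural choice is odd directed cycles $C_{2j+1}$: a cascading argument (in the spirit of the one used earlier in the paper for pure equilibria) shows that any vertex playing a pure strategy $p_i\in\{0,1\}$ in a mixed NE forces a response pattern that propagates round the cycle and reaches a contradiction after an odd number of steps. Hence every mixed NE must have all $p_i\in(0,1)$, uniformly equal to $1-c$, with each player's expected cost equal to $c$; so $mpn_1(C_{2j+1}) = (2j+1)c$, while an alternating cover gives $mb_1(C_{2j+1}) \le (j+1)c$, and the ratio $(2j+1)/(j+1) \to 2 = k+1$. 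For general $k\ge 2$, the plan is to use an analogous cyclic digraph in which every vertex has out-degree exactly $k$, designed so that the cascading argument still rules out pure choices in any mixed NE and prevents any vertex from being strictly over-served.

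The main obstacle is the lower-bound construction for $k\ge 2$. The naive generalisation --- the circulant digraph with arcs to the next $k$ vertices --- admits asymmetric mixed equilibria in which certain players are over-served by two or more buying in-neighbours and thus strictly prefer not-buying with expected cost below $c$; this drags $mpn_k$ below $nc$ and the ratio below $k+1$. The construction must therefore enforce ``tight'' in-service --- in any mixed NE, each served vertex is served by at most one buyer in expectation --- while still admitting a near-perfect cover of size $\lceil n/(k+1)\rceil$ so that $mb_k \le (n/(k+1)+O(1))\,c$. Combining these two properties in a single family is the delicate technical step of the proof.
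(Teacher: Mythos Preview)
Your upper bound argument and your $k=1$ lower bound are essentially the paper's: the paper also bounds $mpn_k(D)\le cn$ via the ``deviate to buying'' argument and $mb_k(D)\ge cn/(k+1)$ via a coverage count, and it uses a cascading/parity contradiction on an odd cyclic structure for the lower bound. Your treatment of $mb_k$ is in fact a bit more careful than the paper's.

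The genuine gap is exactly where you flag it: the lower-bound construction for $k\ge 2$. You correctly observe that the plain circulant can fail (for instance, when $(k+1)\mid n$ it even has a \emph{pure} equilibrium with $n/(k+1)$ buyers, so $mpn_k/mb_k\le 1$ there), but you stop short of a working family. The paper supplies one: take $2r+1$ independent blocks $X_1,\dots,X_{2r+1}$ each of size $k+1$, add all arcs from $X_i$ to $X_{i+1}$ (indices mod $2r+1$), and then \emph{delete a Hamilton cycle} $C'=c_1c_2\cdots c_{(k+1)(2r+1)}c_1$. The resulting digraph $D_k^r$ is $k$-regular (each vertex keeps exactly $k$ of its $k+1$ forward arcs). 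The Hamilton-cycle deletion is the key trick you are missing: it creates, for every vertex $c_j$, a unique ``missing'' in-neighbour $c_{j-1}$, and this is what drives the cascading argument. If some $p(c_1)=1$, then all of $X_2\setminus\{c_2\}$ (the out-neighbours of $c_1$) play $p=0$; but these are precisely the in-neighbours of $c_3$, forcing $p(c_3)=1$; iterating along the deleted cycle with odd length $2r+1$ yields a contradiction. A symmetric argument rules out $p(x)=0$. Hence every mixed NE has $0<p(x)<1$ for all $x$, so every player's expected cost is exactly $c$ and $mpn_k(D_k^r)=c(2r+1)(k+1)$. On the other hand, $2r+2$ consecutive $c_j$'s cover everything, so $mb_k(D_k^r)\le (2r+2)c$, and the ratio tends to $k+1$ as $r\to\infty$.

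In short: your outline is right, but the proof is not complete without the explicit $k$-regular family and the verification that \emph{every} mixed NE on it is fully mixed; the Hamilton-cycle-deleted blown-up odd cycle is the missing ingredient.
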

\iflong
\begin{proof}
We first show that $mb_k(D) \geq \frac{c \cdot n}{k+1}$. 
Note that any assignment of buyers always gives a penalty of at least  $\frac{c \cdot n}{k+1}$ as 
if anybody does not get chosen then letting them buy will decrease the penalty, so we may assume that 
%everybody gets to the destination which will require 
there are at least $\frac{n}{k+1}$ buyers.
%, as any car can contain at most $k+1$ people.
As $mb_k(D)$ is the expected penalty over a given probability space it can not be less than the minimum penalty over any outcome in the probability space.
So, $mb_k(D) \geq \frac{c \cdot n}{k+1}$. 

  In a pure Nash equilibrium every person will have an expected penalty of at most $c$, as if they have an expected penalty of more than this they could 
decrease their penalty by buying (with probability 1). So $mpn_k(D) \leq cn$. This implies the following.

\[
\frac{mpn_k(D)}{mb_k(D)} \leq \frac{cn}{cn/(k+1)} = k+1.
\]

This completes the proof of the first part.  We will now construct a digraph $D_k^r$ for all integers $k \geq 1$ and $r \geq 1$. 
Let $k \geq 1$ and $r \geq 1$ be arbitrary and define $D'$ such that 
 $V(D')=X_1 \cup X_2 \cup \cdots \cup X_{2r+1}$, where each $X_i$ is an independent set of vertices of size $k+1$.
Let $A(D')$ contain all arcs from $X_i$ to $X_{i+1}$ for all $i=1,2,\ldots, 2r+1$ (where indices are taken modulo $2r+1$).
 It is not difficult to see that $D'$ contains a Hamilton cycle $C'$.
Let $D_k^r = D' - A(C')$ and note that $D_k^r$ is $k$-regular.

We may assume that $C'  = c_1 c_2 c_3 \cdots c_{(k+1)(2r+1)} c_1$, where $c_i \in X_i$ for $i=1,2,\ldots,2r+1$.
Let  $S=\{c_1,c_2,\ldots,c_{2r+1},c_{2r+2}\}$ and note that $N_{D_k^r}^+[S]=V(D_k^r)$, so $mb_k(D_k^r) \leq (2r+2)c$. 

Now assume that we have a mixed nash equilibrium, $p$. That is, $p(x)$ denotes the probability that $x$ buys for each $x \in V(D_k^r)$.
First assume that $p(c_1)=1$. This implies that $p(u)=0$ for all $u \in X_2 \setminus \{c_2\}$. This implies that $p(c_3)=1$. Continuing this 
process we see that $p(c_j)=1$ for all $j \in \{1,3,5,\ldots,2r+1, 2r+3 \}$. So, $p(c_{2r+3})=1$, but $c_{2r+3} \in X_2 \setminus \{c_2\}$, so by
the above $p(c_{2r+3})=0$, a contradiction. So $p(c_1)<1$. We can analolously show that $p(x)<1$ for all $x \in V(D_k^r)$.  

We will now show that $p(x)>0$ for all $x \in V(D_k^r)$. Assume this is not the case. As we cannot have $p(x)=0$ for all $x \in V(D_k^r)$, this implies that
there exists an $i$ such that $p(c_i)=0$ and $p(c_{i+1})>0$ (where indices are taken modulo $(k+1)(2r+1)$). Without loss of generality we may assume that $i=1$. 
That is, $p(c_1)=0$ and $p(c_{2})>0$. As $0<p(c_{2})<1$ we note that $c_2$ has probability exactly $1-c$ of being chosen by a buyer. So, some $c_j \in X_1$ has
$p(c_j)>0$. However this implies that $c_{j+1}$ has probability strictly less that $1-c$ of being chosen by a buyer (as $c_j$ will choose $c_2$ with 
positive probability but will not choose $c_{j+1}$ while $c_1$ will not choose either $c_2$ or $c_{j+1}$ as $c_1$ is not buying). But this 
implies that $p(c_{j+1})=1$ in the mixed Nash equilibrium, a contradiction to $p(x)<1$ for all $x \in V(D_k^r)$. So, 
$0<p(x)<1$ for all $x \in V(D_k^r)$ (and if every vertex in $D_k^r$ buys with probability $1-c^{1/k}$ then we obtain a mixed Nash equilibrium).

This implies that every vertex gets a penalty of $c$ and $mpn_k(D_k^r)= c \cdot |V(D_k^r)| = c(2r+1)(k+1)$. Therefore the following holds,

\[
\frac{mpn_k(D_k^r)}{mb_k(D_k^r)} \geq \frac{c(2r+1)(k+1)}{(2r+2)c} = (k+1) \times \left(1 - \frac{1}{2r+2} \right)
\]

Now, as $r \to \infty$ the expression above tends to $k+1$ and the result follows.
\end{proof}
\fi

%\newpage

\section{Discussion}
Our paper provides a complete investigation of the existence, the complexity, and the efficiency of Nash equilibria in public good games with sharing constraints. We focus on directed networks and as we demonstrate this restriction makes a difference for the existence of pure equilibria which always exist in undirected networks~\cite{GerkeGutin:2024:JET,gutin2020uniqueness,GutinNeary:2023:DAM}. In addition, we show that the underlying structure of the network affects the existence of pure Nash equilibria and we provide a dichotomy for this problem. Finally, we obtain bounds, which are tight in many cases, on the price of anarchy and stability; the only open case is stated in Conjecture~\ref{conj1} which we strongly believe is true.

% \newpage
% In this paper, we extended the classic model for public goods in networks \cite{BramoulleKranton:2007:JET} to one where sharing is constrained \citep{GerkeGutin:2024:JET,gutin2020uniqueness,GutinNeary:2023:DAM} and the underlying network need not be undirected \citep{BAYER2023105720,LOPEZPINTADO2013160,PapadimitriouPeng:2023:GEB}.
% We investigate the computational complexity 

% \todo[inline]{Conjecture regarding the right value of PoS = k}

% \todo[inline]{mention example where mixed PoS strictly better than pure}

% \PN{ARGY: please add to this.}

%%%%%%%%%%%%%%%%%%%%%%%%%%%%%%%%%%%%%%%%%%%%%%%%%%%%%%%%%%%%%%%%%%%%%%%%%%%%%%%%
%%%%%%%%%%%%%%%%%%%%%%%%%%%%%%%%%%%%%%%%%%%%%%%%%%%%%%%%%%%%%%%%%%%%%%%%%%%%%%%%
%%%%%%%%%%%%%%%%%%%%%%%%%%%%%%%%%%%%%%%%%%%%%%%%%%%%%%%%%%%%%%%%%%%%%%%%%%%%%%%%
%\cleardoublepage
\section*{Acknowledgments}
Argyrios Deligkas acknowledges the support of the EPSRC grant EP/X039862/1.

\bibliography{refsPublicGoods}

% \ifshort
% \newpage
% \input{repro-checklist}
% \fi

\end{document}